\documentclass[11pt,a4paper]{article}
\usepackage[utf8x]{inputenc}
\usepackage[T1]{fontenc}

\usepackage{amsmath}
\usepackage{amsthm}
\usepackage{comment}
\usepackage{amssymb}
\usepackage{cite}

\newcommand{\non}{\nonumber\\}

\usepackage[pdftex]{graphicx} 
\usepackage[pdftex,linkcolor=black,pdfborder={0 0 0}]{hyperref} 
\usepackage{calc} 
\usepackage{enumitem} 

\frenchspacing 
\usepackage[a4paper, lmargin=0.1666\paperwidth, rmargin=0.1666\paperwidth, tmargin=0.1111\paperheight, bmargin=0.1111\paperheight]{geometry} 

\usepackage[all]{nowidow} 

\usepackage{lipsum} 

\usepackage{tikz}
\usetikzlibrary{decorations.markings}
\tikzset{->-/.style={decoration={
  markings,
  mark=at position #1 with {\arrow{>}}},postaction={decorate}}}
\tikzset{-<-/.style={decoration={
  markings,
  mark=at position #1 with {\arrow{<}}},postaction={decorate}}}


\newtheorem{theorem}{Theorem}
\newtheorem{lemma}[theorem]{Lemma}


\def\Real{{\mathbb R}}

\def\norm#1{\left|#1\right|}
\def\innerprod(#1,#2){{\left<#1\,,\,#2\right>}}

\usepackage{fancyhdr}
\fancyhf{}
\fancypagestyle{firststyle}
{
   \fancyhf{}
   
\fancyfoot[C]{1}
\fancyfoot[R]{\footnotesize\bf File: \jobname.tex}
}
\pagestyle{plain}

\hypersetup{ 	
pdfsubject = {},
pdftitle = {},
pdfauthor = {}
}

\newcommand{\ga}{\gamma}



\newcommand{\p}{\partial}

\def\iMa{{{\mu}}}
\def\iMb{{{\nu}}}
\def\iMc{{{\rho}}}
\def\iMd{{{\kappa}}}

\def\iSa{{{a}}}
\def\iSb{{{b}}}
\def\iSc{{{c}}}

\def\partz{\partial^{(z)}}

\def\partz{\partial}

\def\DEfullstop{\,.}
\def\DEcomma{\,,}
\def\DEnone{}

\def\Interval{{\cal I}}
\def\Mman{{\cal M}}

\def\Cdot{{\dot C}}

\def\ToneTen{{{\cal \theta}}}

\def\deltaThree{\delta^{(3)}}
\def\Vz{{\underline z}}
\def\Vw{{\underline w}}
\def\Vzero{{\underline 0}}
\def\calT{{\cal T}}
\def\TReg{{\cal T}}
\def\Sym{{\textup{Sym}}}
\def\calF{{\cal F}}
\def\dual#1{{\widetilde{#1}}}

\def\Bnabla{{\boldsymbol\nabla}}

\def\PF{{\varsigma}}
\allowdisplaybreaks

\begin{document} 

\title{The stress-energy distributional multipole for both uncharged and charged dust}
\author{
Jonathan Gratus$^{1,4,5}$,
Spyridon Talaganis$^{2,4}$,
and Willow Sparks$^{3,4,6}$
}
\maketitle
\noindent 
$^1$ j.gratus@lancaster.ac.uk. ORCID: 0000-0003-1597-6084
\\
$^2$ s.talaganis@lancaster.ac.uk. ORCID: 0000-0003-0113-7546
\\
$^3$ willow.sparks@stfc.ac.uk. ORCID:0009-0008-7630-9609
\\
$^{4}$ Dept Physics, Lancaster University, Lancaster LA1 4YB, UK
\\
$^{5}$ Cockcroft Institute of Accelerator Science, Daresbury Laboratory, Keckwick Lane, Daresbury, Warrington,
WA4 4AD, UK
\\
$^{6}$ Now based at the Scientific Computing Department, STFC, Daresbury Laboratory, Keckwick Lane, Daresbury, WA4 4AD, UK

\begin{abstract}
In this paper, we formulate the distributional uncharged and charged stress-energy tensors. These are integrals, along a worldline, of derivatives of the delta-function. These distributions are also multipoles and they are prescribed to any order. They represent an extended region of non-self-interacting uncharged or charged dust, shrunken to a single point in space.
We show that the uncharged dust stress-energy multipole is divergence-free, while the divergence of the charged dust stress-energy multipole is given by the current and the external electromagnetic field. We show that they can be obtained by squeezing a regular dust stress-energy tensor onto the worldine. We discuss the aforementioned calculations in a coordinate-free manner.
\end{abstract}


\section{Introduction}
\label{ch_Intro}

There is much interest currently about distributional sources of gravity~\cite{Gratus:2020cok,Gratus:2022vhm,Gratus:2023zcm,Gratus:2018kyo,Steinhoff:2014kwa,Steinhoff:2012rw,Steinhoff:2009tk},  in particular with reference to sources of gravitational waves. These are sources of gravity where all the mass is concentrated on a worldline. Hence the stress-energy tensor is the integral of a delta-function, and its derivatives are along the worldline. One may consider such distributional sources as approximations where the spatial extent of the source is small compared to the observers distance from such a source.

Since Einstein's equation are non-linear equations it is not possible to directly equate the Einstein tensor and a distributional source. By contrast, it is possible to let this distribution be the source for the linearised Einstein's equations. The solutions to the linearised equations can naturally be interpreted as gravitational waves. Hence we can interpret the distribution stress-energy tensor as sources for gravitational waves. In order to be a source of gravity or gravitational waves the stress-energy tensor must satisfy two conditions, namely being symmetric and divergence-free. These conditions can be relaxed if one is only considering a partial stress-energy tensor. For example, if the total stress-energy tensor has two components (one for matter and the other for the electromagnetic field) then it is only the total stress-energy tensor which needs to be symmetric and divergence-free.

In~\cite{Gratus:2020cok} the authors briefly look at the (uncharged) dust model for a quadrupole stress-energy tensor. In this article we extend the work. 
We posit the distributional dust stress-energy tensor. This tensor has not, as far as the authors are aware, been considered before, other than the brief mention in \cite{Gratus:2020cok}. We look at this distributional dust stress-energy tensor in detail, showing it is symmetric and divergence-free for all orders. 

We then consider the distributional stress-energy tensor for charge dust which is symmetric, but the divergence is not zero. As such this can only be a partial stress-energy tensor representing the matter in the model. It should be added to the stress-energy tensor of the electromagnetic field. This is achieved in \cite{Gratus:2021izz} for many charged particles, at the monopole order, where each particle responds to the fields of the other particles. However, this is not possible here due to the rapid diverging of the electromagnetic fields as one approaches the worldlines. As a result, we only demand that the distribution interacts with an external electromagnetic field, and we derive the corresponding divergence equation that it must satisfy. Again we posit an original distributional stress-energy tensor which satisfies this divergence condition.

Distributions can also be be considered as multipoles. The order of the multipole is defined as the maximum number of derivatives of the delta-function used to define it. With respect to sources of gravitational waves, the most interesting case is that of the quadrupole. As a heuristic argument, one can say that the monopole and dipole do not give rise to any gravitational waves, whereas for orders above the quadrupole the corresponding gravitational waves fall off with distance at a faster rate. With current technology it is already challenging to detect the quadrupole contribution, so these higher moments are not relevant. Thus the dominant contribution to gravitational waves is the quadrupole moment. In the case when the background metric is Minkowski, there is an explicit formula for the components of the gravitational waves in terms of the moments of the quadrupole~\cite{Gratus:2023zcm}.

The monopole has no derivatives of the delta-functions. The symmetry and divergence-free conditions imply that the worldline must be a geodesic and that mass is conserved. In the charged case, it implies that the worldline satisfies the Lorentz force equation. 

The dipole has a single derivative of the delta-function. If the worldline is prescribed, the components of the uncharged dipole satisfy the Mathisson–Papapetrou–
Tulczyjew–Dixon equations. This is a well defined system and the dynamics of the components are completely determined by the initial values. By contrast, if the worldline is not prescribed then there is an under-determined system~\cite{Gratus:2020cok,Steinhoff:2009tk} and additional equations are required to determine the motion of the worldline and the dynamics of the dipole. The same problem occurs if the dipole is charged, especially if it is constructed from multiple species.

The quadrupole has two derivatives of the delta-function. In this case, even for uncharged source with the worldline prescribed this is an under-determined system. There are 40 ordinary differential equations (ODEs), for 60 components. Thus one observes that for the most important case, namely the quadrupole, it is not possible to calculate the dynamics of the moments without additional information. These additional pieces of information are called {\em constitutive relations} as they are determined by the underlying constituents of the source. This is to be expected as the gravitational waves arising from two orbiting neutron stars, would be distinct from that of an asymmetric supernova. 

The challenge addressed in this article is to derive the dynamics of multipoles representing either charged or uncharged dust. Here the uncharged dust can model a low density of matter which only interacts with an external gravitational field. It does not model a distribution of matter which is bound by its gravitational field such as orbiting neutron stars. 
One of the consequences of such a model, which we show here, is that it does not spin. This is in line with our intuition, as a distribution of non-interacting dust would fly apart instead of spinning. In~\cite{Gratus:2020cok} we conjectured the constitutive relations for a dust model. This included a non zero spin component, and so does not correspond to the dust multipoles presented here.

The distributional charged dust models dust which interacts with an external electromagnetic field, not its own internal field. Thus it cannot be used to model a body held together by its own electrostatic forces. The external electromagnetic field in interstellar space is only of the order a few microgauss. By contrast electromagnetic fields near planets are 10s of Gauss and those near a neutron star or black hole may be 1000s of Gauss. Thus the charged dust distribution can be used to model matter orbiting a neutron star or in the accretion disc of a black hole.

\vspace{1em}

This article is arranged as follows. In section \ref{ch_Stress} we recap the Ellis representation~\cite{Ellis:1975rp} of a multipole. We state the dynamic equations for the quadrupole total stress-energy tensor. In~\cite{Gratus:2023zcm} the authors compared the advantages of the Ellis representation which uses partial derivatives and the Dixon representation~\cite{Dixon:1970zz} which uses covariant derivatives. 

In section \ref{ch_Dust} we look at the uncharged dust multipole. Using the Ellis representation, in a coordinate system adapted to a congruence of geodesics, allows us to greatly simplify the calculations. We present the uncharged dust multipole for any order and show that is it divergence-free. We also show that it automatically satisfies the dynamic equations for a total stress-energy tensor. We also show that it arises when one squeezes a regular dust stress-energy tensor onto a worldline.

In section \ref{ch_CDust} we repeat the process for a charged dust multipole. In this case we use a coordinate system adapted to a congruence of worldlines which satisfy the Lorentz force equation (for the same species).  We derive the formula for divergence of the stress-energy tensor for a charged distribution for which there is no self interaction. We present the charged dust multipole for any order and show that its divergence satisfies this formula. We then derive the dynamical equations for the quadrupole moments of an arbitrary charged quadrupole,  and show that it is satisfied by the dust quadrupole.

In section \ref{ch_DeRham} we show how the above calculations can be performed in a coordinate free manner, using the exterior covariant derivative; this is useful when expressing distributional quantities in coordinate systems not adapted to the flow. Arbitrary uncharged and charged multipoles up to quadrupole order were considered, and the equations for the components were derived. 

Finally, in chapter \ref{ch_Concl} we conclude and discuss future work.


\section{The stress-energy tensor in adapted Ellis coordinates}
\label{ch_Stress}

Let $(M,g)$ be spacetime with the Levi-Civia connection. We use Greek indices for the range $\mu,\nu,\ldots=0,1,2,3$ and Latin indices for $a,b,\ldots=1,2,3$, with implicitly summation for repeated indexes.  Round brackets in the indices mean the complete symmetric sum of these indices, for example 
$\chi^{\mu\nu(a b c)}=\tfrac16(\chi^{\mu\nu a b c}+\chi^{\mu\nu a c b}+\chi^{\mu\nu b a c}+\chi^{\mu\nu b c a}+\chi^{\mu\nu c a b)}+\chi^{\mu\nu c b a)})$.

Since we are dealing with distributions it is most convenient to consider $T^{\iMa\iMb}$ as a tensor density\footnote{An integral over $\Mman$ must contain the measure $\omega$. There is therefore the following choice: one can choose $T^{\iMa\iMb}$ or $\phi_{\iMa\iMb}$ to be a density of weight 1, or put $\omega$ explicitly in the integrand.  Here we have chosen to make $T^{\iMa\iMb}$ a density.} of weight 1. Thus $\omega^{-1} T^{\iMa\iMb}$ is a tensor, where
\begin{align}
\omega=\sqrt{-\det(g_{\iMa\iMb})}
\DEfullstop
\label{Intro_def_rootg}
\end{align}
 The definition of the covariant derivative of a tensor $S^{\iMa\iMb\cdots}$ density of weight 1 is given by
\begin{align}
\nabla_\iMa S^{\iMb\iMc\cdots}
=
\omega \nabla_\iMa (\omega^{-1} S^{\iMb\iMc\cdots})
=
- 
\Gamma^\iMd_{\iMa\iMd}\, S^{\iMb\iMc\cdots} 
+
\partial_\iMa S^{\iMb\iMc\cdots} 
+
\Gamma^\iMb_{\iMa\iMd}S^{\iMd\iMc\cdots} 
+
\Gamma^\iMc_{\iMa\iMd}S^{\iMb\iMd\cdots} 
+\cdots
\label{Intro_ten_den}
\end{align}
so that if $U^\iMa$ is a density of weight 1
\begin{align}
\nabla_\iMa U^\iMa 
=
\partial_\iMa U^\iMa 
\label{Intro_ten_res}
\DEfullstop
\end{align}
In this article all distributions are considered to be Schwartz distributions. The stress-energy tensor $T^{\mu\nu}$ density  distribution satisfies the symmetry condition
\begin{align}
T^{\mu\nu} = T^{\nu\mu}
\label{Stress_Sym}
\end{align}
and the divergence-free condition
\begin{align}
\nabla_\mu T^{\mu\nu} = 0
\label{Stress_Div-free}
\DEfullstop
\end{align}
It is defined by the way it acts on test tensors $\phi_{\mu\nu}$ of compact support via
\begin{align}
\int_M T^{\mu\nu}\,\phi_{\mu\nu} \, d^4 x
\label{Stress_Tmunu_action}
\DEfullstop
\end{align}
There are several ways of writing the distributional stress-energy tensor. These are given by the Ellis representation in general coordinates, the Ellis representation in adapted coordinates, and the Dixon representation. There is also a coordinate free construction. 
For this work the Ellis representation in adapted coordinates greatly simplifies the calculations. 

Let $C^{\mu}(\sigma)$ be the worldline which is the support of $T^{\mu\nu}$. We work in a coordinate system $(\sigma,z^1,z^2,z^3)$ which is adapted to the worldline $C^{\mu}(\sigma)$, so that $C^{\mu}(\sigma)=(\sigma,0,0,0)$ and $\Cdot^\mu=\delta^\mu_0$. Let $\Vz=(z^1,z^2,z^3)$ denote the spatial coordinates.
In these adapted Ellis coordinates, the general multipole of order $k$  can be written\footnote{In this article we have slightly changed the notation compared to \cite{Gratus:2020cok}. We have removed the trailing zeros in $\chi^{\mu \nu a_1 \dots a_r}$. This simplifies the notation when dealing with arbitrary order.} as
\begin{align}
T^{\mu \nu}
=
\sum_{r=0}^{k} \frac{1}{r!} 
\chi^{\mu \nu a_1 \dots a_r } (\sigma)
\partial_{a_1} \dots \partial_{a_r} \delta ^{(3)}(\Vz)
\label{Stress_Tmunu_Multi}
\DEcomma
\end{align}
so that (\ref{Stress_Tmunu_action}) becomes
\begin{align}
\int_M T^{\mu\nu}\,\phi_{\mu\nu} d^4 x
=
\sum_{r=0}^{k} \frac{(-1)^r}{r!} 
\int_\Real \chi^{\mu \nu a_1 \dots a_r } (\sigma)
\ \partial_{a_1} \dots \partial_{a_r} \phi_{\mu\nu}(\sigma,\Vzero)
\label{Stress_Tmunu_res}
\DEfullstop
\end{align}
From the symmetry (\ref{Stress_Sym}) these components satisfy
\begin{align}
\chi^{\mu \nu a_1 \dots a_r }
=
\chi^{\nu \mu a_1 \dots a_r }
\label{Stress_Sym_comp1}
\DEcomma
\end{align}
while from the commutation of partial derivatives we have
\begin{align}
\chi^{\mu \nu a_1 \dots a_r }
=
\chi^{\mu \nu (a_1 \dots a_r) }
\label{Stress_Sym_comp2}
\DEfullstop
\end{align}
By using squeezing, as we do below in section \ref{ch_Dust}, we see that there is a relationship between the components $\chi^{\mu \nu a_1 \dots a_r}$ and the moments of a regular stress-energy tensor.

At the quadrupole $k=2$ order (\ref{Stress_Tmunu_Multi}) becomes
\begin{align}
T^{\mu \nu}
=
\chi^{\mu \nu} (\sigma) \delta ^{(3)}(\Vz)
+
\chi^{\mu \nu a} (\sigma) \partial_{a} \delta ^{(3)}(\Vz)
+
\tfrac12\chi^{\mu \nu a b} (\sigma) \partial_{a} \partial_b \delta ^{(3)}(\Vz)
\label{Stress_Tmunu_quad}
\DEfullstop
\end{align}
From the divergence-free condition~\eqref{Stress_Div-free} these components satisfy 
\begin{align}
\dot\chi^{\mu0}
&=
- \Gamma^{\mu}_{\nu\rho}\, \chi^{\rho\nu}
+(\partz_{\iSa}\Gamma^{\mu}_{\nu\rho})\, \chi^{\rho \nu  \iSa}
-\tfrac12\big(\partz_{\iSb}\partz_{\iSa}\Gamma^{\mu}_{\nu\rho}\big) 
\chi^{\rho \nu \iSa \iSb}
\DEcomma
\label{QP_DTeqn_a000}
\\ 
\dot \chi^{\mu0\iSa}
&=
-\chi^{\mu\iSa}
- \Gamma^{\mu}_{\nu\rho}\, \chi^{\rho \nu \iSa}
+ (\partz_{\iSb}\Gamma^{\mu}_{\nu\rho})\, \chi^{\rho \nu \iSb \iSa}
\DEcomma
\label{QP_DTeqn_a00m}
\\ 
\dot \chi^{\mu 0 \iSa\iSb}
&=
- 2\chi^{\mu (\iSb\iSa) }
- \Gamma^{\mu}_{\nu\rho}\, \chi^{\rho \nu \iSa \iSb}
\label{QP_DTeqn_a0mn}
\DEnone
\end{align}
and
\begin{align}
\chi ^{\mu (\iSa \iSb \iSc)} &=0
\label{QP_DTeqn_alg}
\DEfullstop
\end{align}
This is proved in~\cite{Gratus:2020cok}. It is also a special case of theorem~\ref{thm Charged SE quad}, when $F_{\mu\nu}=0$ which is proved below.


\section{Uncharged dust}
\label{ch_Dust}

We express the formula for uncharged dust in an adapted coordinate system $(\sigma,z^1,z^2,z^3)$. This coordinate system is adapted to a congruence of worldline. Thus each curve by given $z^a=\text{const.}$ for $a=1,2,3$ is a geodesic. Hence the Christoffel symbols satisfy
\begin{align}
\Gamma^\mu_{00} = 0 
\label{Dust_Chrisoffel}
\DEfullstop
\end{align}
From \eqref{Intro_ten_res}, then setting $U^\mu=\delta^\mu_0$ to be a vector density of weight 1 we have
\begin{align}
    \nabla_\mu \delta^\mu_0 = \partial_\mu \delta^\mu_0 = 0
    \DEfullstop
    \label{Dust_Del_delta}
\end{align}

We can now formulate the dust multipole stress-energy tensor, in terms of this adapted coordinate system. As we stated in the introduction, this has not been considered previously in the literature, except for a brief mention in \cite{Gratus:2020cok}. This is a tensor densities of order $k$ and weight 1, given by
\begin{align}
T^{\mu\nu} 
&=
m\,\delta^\mu_0 \delta^\nu_0 \sum_{r=0}^k \frac{1}{r!} {{Y}}^{a_1\cdots a_r} 
\partial_{a_1} \cdots \partial_{a_r} \delta^{(3)}(\Vz)
\label{Dust_Tmunu_dust}
\DEcomma
\end{align}
where each ${{Y}}^{a_1\cdots a_r}$ is a constant and satisfies the symmetries (\ref{Stress_Sym_comp1}) and (\ref{Stress_Sym_comp2}), {and $Y^\emptyset=1$. Here $Y^\emptyset$ refers to case when there are no indices on $Y$, i.e. $r=0$. This mass could be incorporated into the $Y^{a_1\cdots a_r}$. However it is needed in the charged case when we need the ratio $q/m$.} We will show that this stress-energy tensor satisfies the divergence-free condition (\ref{Stress_Div-free}) and is also the limit of regular dust as it is squeezed onto the worldline.
\begin{lemma}
\label{lm_Dust_div-free}
The stress-energy distribution given in \eqref{Dust_Tmunu_dust} satisfies the divergence-free condition \eqref{Stress_Div-free}.
\end{lemma}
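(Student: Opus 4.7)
The plan is to evaluate $\nabla_\mu T^{\mu\nu}$ directly from the tensor-density formula \eqref{Intro_ten_den}, letting the adapted coordinate system do almost all of the work. For a weight-one density of type $(2,0)$ the formula reads
\begin{align}
\nabla_\mu T^{\mu\nu}
= \partial_\mu T^{\mu\nu}
  -\Gamma^\rho_{\mu\rho}\,T^{\mu\nu}
  +\Gamma^\mu_{\mu\rho}\,T^{\rho\nu}
  +\Gamma^\nu_{\mu\rho}\,T^{\mu\rho}.
\end{align}
The first step is therefore a relabelling argument: after swapping $\mu\leftrightarrow\rho$ in the third term and using the symmetry $\Gamma^\mu_{\rho\mu}=\Gamma^\mu_{\mu\rho}$ of the Levi-Civita connection in its lower indices, the density correction $-\Gamma^\rho_{\mu\rho}T^{\mu\nu}$ cancels $\Gamma^\mu_{\mu\rho}T^{\rho\nu}$ exactly. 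This is the same cancellation that underlies \eqref{Intro_ten_res}, so it really is just confirming that \eqref{Dust_Del_delta} is the mechanism being invoked.

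Next I would deal with the remaining Christoffel contribution $\Gamma^\nu_{\mu\rho}T^{\mu\rho}$. The explicit factor $\delta^\mu_0\delta^\nu_0$ in \eqref{Dust_Tmunu_dust} collapses this to $\Gamma^\nu_{00}$ times a scalar density, which vanishes by the geodesic-congruence property \eqref{Dust_Chrisoffel}. This is exactly the place where the adapted coordinate choice is essential: had the congruence not been geodesic, this term would have survived and the whole result would fail.

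Finally I would treat $\partial_\mu T^{\mu\nu}$. The $\delta^\mu_0$ contracts with $\partial_\mu$ to give a single $\partial_\sigma$, and the $\delta^\nu_0$ factors out, so
\begin{align}
\partial_\mu T^{\mu\nu}
=\delta^\nu_0\,\partial_\sigma\!\left(m\sum_{r=0}^{k}\tfrac{1}{r!}Y^{a_1\cdots a_r}\partial_{a_1}\cdots\partial_{a_r}\delta^{(3)}(\Vz)\right).
\end{align}
Because the $Y^{a_1\cdots a_r}$ are constants (and $m$ is a $\sigma$-independent mass of the dust), and because $\delta^{(3)}(\Vz)$ depends only on the spatial coordinates, the $\sigma$-derivative annihilates each summand. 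Combining the three steps gives $\nabla_\mu T^{\mu\nu}=0$ as required.

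There is no genuine analytic obstacle; the entire lemma reduces to the bookkeeping above. The only point that requires a little care, and which I would make explicit in the write-up, is the first cancellation: one must verify that the density weight contributes precisely the term needed to kill $\Gamma^\mu_{\mu\rho}T^{\rho\nu}$, because if it did not, spatial components of $\Gamma^\mu_{\mu\rho}$ could couple to derivatives of the delta function and produce non-vanishing distributional terms.
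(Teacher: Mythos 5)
Your proposal is correct and follows essentially the same route as the paper's proof: the paper assigns the density weight to the factor $\delta^\mu_0$ and applies the Leibniz rule so that $\nabla_\mu\delta^\mu_0=0$ absorbs your first cancellation, then kills the remaining two terms exactly as you do, via $\Gamma^\nu_{00}=0$ from the geodesic congruence and the $\sigma$-independence of the constant moments $Y^{a_1\cdots a_r}$ and of $\delta^{(3)}(\Vz)$. Your explicit check that the weight-1 correction $-\Gamma^\rho_{\mu\rho}T^{\mu\nu}$ cancels $\Gamma^\mu_{\mu\rho}T^{\rho\nu}$ is just the component-level form of the same mechanism, so there is no substantive difference.
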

\begin{proof}
Since $T^{\mu\nu}$ is a tensor density of weight 1, we can choose any of the factors on the right hand side of \eqref{Dust_Tmunu_dust} to carry the tensor density. We choose the factor $\delta^\mu_0$ to have weight 1 and the rest of the factors to have weight 0. Thus $\nabla_\mu\delta^\mu_0=0$.
From (\ref{Dust_Chrisoffel}) 
\begin{align*}
\nabla_\mu T^{\mu\nu}
&=
m \delta^\mu_0 (\nabla_\mu \delta^\nu_0) \sum_{r=0}^k {\frac{1}{r!}} {{Y}}^{a_1\cdots a_r} 
\partial_{a_1} \cdots \partial_{a_r} \delta^{(3)}(\Vz)
\\&\quad +
 m\delta^\mu_0 \delta^\nu_0 \sum_{r=0}^k {\frac{1}{r!}} \partial_{\mu} \big({{Y}}^{a_1\cdots a_r} 
\partial_{a_1} \cdots \partial_{a_r} \delta^{(3)}(\Vz)\big)
\\&=
m \Gamma^\nu_{00} \sum_{r=0}^k {\frac{1}{r!}} {{Y}}^{a_1\cdots a_r} 
\partial_{a_1} \cdots \partial_{a_r} \delta^{(3)}(\Vz)
\\&\quad +
 m\delta^\nu_0 \sum_{r=0}^k {\frac{1}{r!}} \partial_{0} \big({{Y}}^{a_1\cdots a_r} 
\partial_{a_1} \cdots \partial_{a_r} \delta^{(3)}(\Vz)\big)
\\&\quad
=0
\DEfullstop
\end{align*}
\end{proof}

\begin{lemma}
\label{lm_Dust_check}
As a check, we can show that at quadrupole order the dust stress-energy tensor \eqref{Dust_Tmunu_dust} 
satisfies equation \eqref{QP_DTeqn_a000}--\eqref{QP_DTeqn_alg}.
\end{lemma}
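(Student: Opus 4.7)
The plan is to substitute the dust ansatz (\ref{Dust_Tmunu_dust}) into each of (\ref{QP_DTeqn_a000})--(\ref{QP_DTeqn_alg}) and show both sides vanish. First I would read off the multipole components of the quadrupole dust tensor by matching (\ref{Dust_Tmunu_dust}) at $k=2$ against (\ref{Stress_Tmunu_quad}), giving
\begin{align*}
\chi^{\mu\nu} = m\,\delta^\mu_0\delta^\nu_0,\qquad
\chi^{\mu\nu a} = m\,\delta^\mu_0\delta^\nu_0\, Y^a,\qquad
\chi^{\mu\nu ab} = m\,\delta^\mu_0\delta^\nu_0\, Y^{ab}.
\end{align*}
Two structural features then drive everything. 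First, the mass $m$ and the constants $Y^{a_1\cdots a_r}$ are $\sigma$-independent, so every time derivative on the left-hand side of (\ref{QP_DTeqn_a000})--(\ref{QP_DTeqn_a0mn}) is automatically zero. Second, the Kronecker factor $\delta^\nu_0$ annihilates any occurrence of a spatial index in a $\nu$-slot; for instance $\chi^{\mu a}$ and $\chi^{\mu a bc}$ both vanish identically (since the second index is spatial), which immediately disposes of the $-\chi^{\mu a}$ and $-2\chi^{\mu(ba)}$ terms in (\ref{QP_DTeqn_a00m}) and (\ref{QP_DTeqn_a0mn}), and also gives (\ref{QP_DTeqn_alg}).

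Next I would handle the Christoffel-symbol contributions on the right-hand sides. The contraction $\Gamma^\mu_{\nu\rho}\chi^{\rho\nu\cdots}$ always collapses via the $\delta^\rho_0\delta^\nu_0$ factors to $m\,\Gamma^\mu_{00}\,(\text{const})$, and this vanishes by the adapted-geodesic condition (\ref{Dust_Chrisoffel}). The slightly more subtle terms are those involving $\partz_a \Gamma^\mu_{\nu\rho}$ and $\partz_a\partz_b\Gamma^\mu_{\nu\rho}$ in (\ref{QP_DTeqn_a000}) and (\ref{QP_DTeqn_a00m}); here I would emphasise that the coordinate system is adapted to a \emph{congruence} of geodesics, not merely to one geodesic, so $\Gamma^\mu_{00}(\sigma,\Vz)=0$ holds throughout a neighbourhood of the worldline rather than only on it, and consequently all spatial partial derivatives $\partz_{a_1}\cdots\partz_{a_r}\Gamma^\mu_{00}$ vanish as well. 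This observation is the one genuinely non-automatic step and is where the geodesic-congruence hypothesis really earns its keep.

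With these two reductions, every equation becomes $0=0$. The algebraic constraint (\ref{QP_DTeqn_alg}) follows without any Christoffel input, purely from $\delta^\nu_0\delta^a_0=0$ for spatial $a$. The expected main obstacle is not computational difficulty but the bookkeeping point just noted: making sure the reader sees that the vanishing of $\Gamma^\mu_{00}$ off the worldline (as a function of $\Vz$) is what licenses dropping the $\partz_a\Gamma^\mu_{00}$ terms, which is otherwise the only place (\ref{Dust_Chrisoffel}) alone would be insufficient. Everything else is a direct contraction against $\delta^\mu_0$, $\delta^\nu_0$, and the constancy of $m$ and the $Y^{a_1\cdots a_r}$. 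Since the full, general-order divergence-free statement is already established in Lemma~\ref{lm_Dust_div-free}, this lemma is, as the authors say, merely a consistency check at quadrupole order.
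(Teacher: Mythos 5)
Your proposal is correct and follows essentially the same route as the paper's own (very terse) proof: substitute $\chi^{\mu\nu a_1\cdots a_r}=m\,\delta^\mu_0\delta^\nu_0\,Y^{a_1\cdots a_r}$, kill the left-hand sides by constancy of $m$ and the $Y$'s, the spatial-index terms by $\delta^a_0=0$, and the Christoffel terms by $\Gamma^\mu_{00}=0$. Your explicit observation that the geodesic-\emph{congruence} condition makes $\Gamma^\mu_{00}$ vanish in a whole neighbourhood of the worldline, so that $\partial_a\Gamma^\mu_{00}$ and $\partial_a\partial_b\Gamma^\mu_{00}$ vanish as well, is precisely the step the paper compresses into ``it is trivial to see.''
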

\begin{proof}
From (\ref{Dust_Tmunu_dust}) we see 
$\chi^{\rho \nu \sigma \ldots}=\delta^\rho_0\delta^\nu_0 {{Y}}^{\sigma \ldots}$. 
Hence, from (\ref{Dust_Chrisoffel}), it is trivial to see that the right
hand sides of (\ref{QP_DTeqn_a000})-(\ref{QP_DTeqn_a0mn})
vanish. Likewise for the left hand side of (\ref{QP_DTeqn_alg}). 
Since $\chi^{\rho \nu \ldots}$ are constant the left hand side of
(\ref{QP_DTeqn_a000})-(\ref{QP_DTeqn_a0mn}) also vanish.
\end{proof}

We can understand this distribution as a model for dust by taking the squeezed limit.
Let 
\begin{equation}
\calT^{\mu\nu}
=
\varrho(\Vz) \delta^\mu_0 \delta^\nu_0
\label{Dust_reg_T}
\DEcomma
\end{equation}
where $\varrho$ is a scalar field (density of weight 0) and $\delta_0^\mu$ is a vector density of weight 1. That is $\partial_0(\varrho)=0$. This is the stress-energy  tensor density for dust, and we see that 
\begin{align*}
\nabla_\mu \calT^{\mu\nu}
&\quad
=
\nabla_\mu \big(\varrho(\Vz) \delta^\mu_0 \delta^\nu_0\big)
\\&\quad
=
\delta^\nu_0 \delta^\mu_0 \nabla_\mu \varrho(\Vz) 
+
\delta^\nu_0 \varrho(\Vz) \nabla_\mu \delta^\mu_0
+
\varrho(\Vz) \delta^\mu_0 \nabla_\mu \big(\delta^\nu_0\big)
\\&\quad
=
\delta^\nu_0 \delta^\mu_0 \partial_\mu \big(\varrho(\Vz) \big)
+
\varrho(\Vz) \nabla_0 \big(\delta^\nu_0\big)
\\&\quad
=
\delta^\nu_0 \partial_0 \big(\varrho(\Vz) \big)
+
\varrho(\Vz) \Gamma^\mu_{00}
\\&\quad
= 0
\DEfullstop
\end{align*}
Now consider a 1--parameter family of such stress-energy tensor densities, of weight 1, given by 
\begin{align}
\calT^{\mu\nu}_\epsilon
=
\epsilon^{-3}
\varrho(\epsilon^{-1} z) \delta^\mu_0 \delta^\nu_0 
\label{Dust_reg_T_eps}
\DEfullstop
\end{align}

\begin{lemma}
\label{lm_Dust_expan}
The Taylor expansion about $\epsilon=0$, to order $k$, is given by
\begin{equation}
\calT^{\mu\nu}_\epsilon
=
\hat{T}^{\mu\nu}_\epsilon
+ O(\epsilon^{k+1})
\label{Dust_T_expan}
\DEcomma
\end{equation}
where
\begin{equation}
\hat{T}^{\mu\nu}_\epsilon
=
m\,\delta^\mu_0 \delta^\nu_0 \sum_{r=0}^k 
\frac{\epsilon^r}{r!}
{{Y}}^{a_1\cdots a_r} 
\partial_{a_1} \cdots \partial_{a_r} \delta^{(3)}(\Vz)
\label{Dust_def_hat_TT}
\DEcomma
\end{equation}
and
\begin{align}
{{Y}}^{a_1\cdots a_r}
=
\frac{(-1)^r}{m}\int_{\Real^3}
z^{a_1}\cdots z^{a_r} \varrho(\Vz)\, d^3\Vz
\label{Dust_moments}
\end{align}
and the symbol $O(\epsilon^{k+1})$ means that any difference falls to zero as fast as $\epsilon^{k+1}$.
\end{lemma}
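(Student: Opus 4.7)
The plan is to prove this by pairing both sides of \eqref{Dust_T_expan} with an arbitrary test tensor $\phi_{\mu\nu}$ of compact support and performing a distributional Taylor expansion. Since $\calT^{\mu\nu}_\epsilon$ has only the $\mu=\nu=0$ component nonzero, the pairing reduces to a scalar integral against $\phi_{00}$, which simplifies the bookkeeping considerably.

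First I would substitute \eqref{Dust_reg_T_eps} into the action \eqref{Stress_Tmunu_action} and change spatial variables $\Vz = \epsilon \Vw$, giving
\begin{align*}
\int_M \calT^{\mu\nu}_\epsilon\,\phi_{\mu\nu}\, d^4x
=
\int_\Real \int_{\Real^3} \varrho(\Vw)\,\phi_{00}(\sigma,\epsilon \Vw)\, d^3\Vw\, d\sigma
\DEfullstop
\end{align*}
Next I would Taylor expand $\phi_{00}(\sigma,\epsilon\Vw)$ in its spatial argument about $\Vw=\Vzero$ to order $k$, writing
\begin{align*}
\phi_{00}(\sigma,\epsilon\Vw)
=
\sum_{r=0}^k \frac{\epsilon^r}{r!}\,
w^{a_1}\cdots w^{a_r}\,
\partial_{a_1}\cdots\partial_{a_r}\phi_{00}(\sigma,\Vzero)
+ R_k(\sigma,\epsilon\Vw)
\DEcomma
\end{align*}
with the usual integral form remainder $R_k$. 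Substituting back, interchanging the $\Vw$ integral with the finite sum, and using the definition \eqref{Dust_moments} of ${{Y}}^{a_1\cdots a_r}$ (noting that the $r=0$ term gives $m\,Y^\emptyset = m$, which matches the normalisation $Y^\emptyset=1$ provided we identify $m=\int \varrho\,d^3\Vw$), produces precisely the sum
\begin{align*}
\sum_{r=0}^k \frac{(-\epsilon)^r}{r!}\, m\, {{Y}}^{a_1\cdots a_r}
\int_\Real \partial_{a_1}\cdots\partial_{a_r}\phi_{00}(\sigma,\Vzero)\, d\sigma
\DEcomma
\end{align*}
which, by \eqref{Stress_Tmunu_res} applied to the multipole \eqref{Dust_def_hat_TT}, is exactly $\int_M \hat{T}^{\mu\nu}_\epsilon\,\phi_{\mu\nu}\, d^4 x$.

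It then remains to estimate the contribution of the remainder $R_k$. Using the standard bound $|R_k(\sigma,\epsilon\Vw)| \le C\,\epsilon^{k+1}|\Vw|^{k+1}\sup_{|\alpha|=k+1}|\partial^\alpha\phi_{00}|$, the remainder integral is bounded by $\epsilon^{k+1}$ times a constant depending on moments of $\varrho$ of order $k+1$ and derivatives of $\phi$ of order $k+1$. The main subtlety I anticipate is justifying this bound: if $\varrho$ is not compactly supported one needs enough decay for the $(k{+}1)$\textsuperscript{th} moment to exist; assuming (as is natural for a dust profile being squeezed onto a worldline) that $\varrho$ has compact support or Schwartz decay, this is immediate, and the support of $\phi$ makes the $\sigma$ integration finite. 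That gives the error estimate $O(\epsilon^{k+1})$ in the distributional sense, completing the proof.
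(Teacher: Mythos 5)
Your proposal is correct and follows essentially the same route as the paper's proof: pair with a test tensor, rescale $\Vz=\epsilon\Vw$, Taylor expand $\phi_{00}(\sigma,\epsilon\Vw)$ to order $k$, and identify the coefficients with the moments \eqref{Dust_moments} to recover the action of $\hat{T}^{\mu\nu}_\epsilon$. Your explicit treatment of the Taylor remainder and of the decay/normalisation hypotheses on $\varrho$ (needed for the $(k{+}1)$th moment and for $Y^\emptyset=1$) is a welcome sharpening of a step the paper leaves implicit, but it is not a different argument.
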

\begin{proof}
This follows from setting
$w^\iSa=z^\iSa/\varepsilon$ and Taylor expanding around $\varepsilon=0$
we have
\begin{align*}
\int_{\Real^4}\TReg^{\mu\nu}_\varepsilon(&\sigma,\Vz) \,
\phi_{\mu\nu}(\sigma,\Vz)\, 
d\sigma\,d^3z
\\&=
\int_{\Real} d\sigma \int_{\Real^3} d^3z\ 
\TReg^{\mu\nu}_\varepsilon(\sigma,\Vz) \,\phi_{\mu\nu}(\sigma,\Vz)\, 
\\&=
\int_{\Real} m \ d\sigma \int_{\Real^3} d^3z\ \frac{1}{m}
\epsilon^{-3}
\varrho(\epsilon^{-1} z) \delta^\mu_0 \delta^\nu_0 
\,\phi_{\mu\nu}(\sigma,\Vz)\, 
\\&=
\int_{\Real} m \ d\sigma \int_{\Real^3} d^3\Vw\ \frac{1}{m}
\varrho(\Vw) 
\,\phi_{00}(\sigma,\varepsilon \Vw)
\\&=
\int_{\Real} m \ d\sigma \int_{\Real^3} d^3\Vw\ \frac{1}{m} 
\sum_{r=0}^k \frac{\epsilon^r}{r!} \varrho(\Vw)
\,w^{a_1}\cdots w^{a_r}
\big(\partz_{a_1}\cdots\partz_{a_r}\phi_{00}(\sigma,\Vzero)\big)
+ O(\epsilon^{k+1})
\\&=
\sum_{r=0}^k\frac{\epsilon^r}{r!}
\int_{\Real} m \ d\sigma \big(\partz_{a_1}\cdots\partz_{a_r}\phi_{00}(\sigma,\Vzero)\big)
\int_{\Real^3} d^3\Vw\ \frac{1}{m}
w^{a_1}\cdots w^{a_r}\,\varrho(\Vw)
+ O(\epsilon^{k+1})
\\&=
\sum_{r=0}^k\frac{\epsilon^r (-1)^r}{r!}
\int_{\Real} d\sigma \ m {{Y}}^{a_1\cdots a_r} \big(\partz_{a_1}\cdots\partz_{a_r}\phi_{00}(\sigma,\Vzero)\big)
+ O(\epsilon^{k+1})
\\&=
\sum_{r=0}^k \frac{\epsilon^r}{r!} 
\int_{\Real} m \ d\sigma  \int_{\Real^3} d^3\Vz {{Y}}^{a_1\cdots a_r}
\big(\partz_{a_1}\cdots\partz_{a_r}\delta^{(3)}(\Vz)\big)  \big(\phi_{00}(\sigma,\Vz)\big)
+ O(\epsilon^{k+1})
\\&=
\int_{\Real} d\sigma  \int_{\Real^3} d^3\Vz
\sum_{r=0}^k \frac{\epsilon^r}{r!} 
 m {\delta^\mu_0 \delta^\nu_0 {Y}}^{a_1\cdots a_r}
\big(\partz_{a_1}\cdots\partz_{a_r}\delta^{(3)}(\Vz)\big)  \big(\phi_{\mu\nu}(\sigma,\Vz)\big)
+ O(\epsilon^{k+1})
\\&=
\int_{\Real^4} d\sigma\,  d^3\Vz\ 
{\hat T}^{\mu\nu}_\epsilon\,\phi_{\mu\nu}(\sigma,\Vz)
+ O(\epsilon^{k+1})
\DEfullstop
\end{align*}
\end{proof}
Clearly setting $\epsilon=1$ we have $\hat{T}^{\mu\nu}_1=T^{\mu\nu}$. However the nature of \eqref{Dust_T_expan} is more subtle, since we cannot simply set $\epsilon=1$. There are various interpretations. One option is to choose a total error ${\cal E}_{\max}$. Then from \eqref{Dust_T_expan} there is a value of $\epsilon$ such that 
$ | {\cal T}^{\mu\nu}_{\epsilon} - {\hat T}^{\mu\nu}_{\epsilon} | <  {\cal E}_{\text{max}} $, for all components. One can then redefine the $Y^{a_1\cdots a_r}\to \epsilon^r Y^{a_1\cdots a_r}$ to incorporate this value of $\epsilon$. Then 
$ | {\cal T}^{\mu\nu}_{\epsilon} - T^{\mu\nu} | <  {\cal E}_{\text{max}} $. Furthermore by replacing $\epsilon \to\epsilon/2$ we reduce the error by ${\cal E}_{\max}\to 2^{-k-1}{\cal E}_{\max}$.



\begin{figure}
\centering\small
\begin{tikzpicture}[xscale=0.7,yscale=0.7]
\fill [green!60!white] (-2.5,0) rectangle (2.5,10) ;
\fill [green!80!black] (-1.5,0) rectangle (1.5,10) ;
\fill [green!50!black] (-1,0) rectangle (0.25,10) ;
\fill [green!50!black] (0.5,0) rectangle (1.25,10) ;

\draw [ultra thick,black](0,0) node[below] {$z{=}0$} -- +(0,10) ;
\draw [very thick,black](1,0) node[below]{1} -- +(0,10) ;
\draw [very thick,black](2,0) node[below]{2} -- +(0,10) ;
\draw [very thick,black](3,0) node[below]{3} -- +(0,10) ;
\draw [very thick,black](-1,0) node[below]{-1} -- +(0,10) ;
\draw [very thick,black](-2,0) node[below]{-2} -- +(0,10) ;
\draw [very thick,black](-3,0) node[below]{$z{=}$-3} -- +(0,10) ;
\draw [thick,->] (-3.8,3) -- node[left] {$\sigma$} +(0,4) ;

\draw [ultra thick,red,dashed] 
(-3.5,1) -- +(7,0) 
(-3.5,3) -- +(7,0) 
(-3.5,5) -- +(7,0)  
(-3.5,7) -- +(7,0)  ; 
\end{tikzpicture}
\quad
\begin{tikzpicture}[xscale=0.6]
\fill [green!60!white] (-2.5,0) to[out=90,in=-40] 
(-5,7) -- (5,7) to[out=-130,in=90] (2.5,0) -- cycle ;
\fill [green!80!black] (-1.5,0) to[out=90,in=-60] 
(-3,7) -- (3,7) to[out=-125,in=90] (1.5,0) -- cycle ;
\fill [green!50!black] (-1,0) to[out=90,in=-70] 
(-2,7) -- (0.5,7) to[out=-95,in=90] (0.25,0) -- cycle ;
\fill [green!50!black] (0.5,0) to[out=90,in=-100] 
(1,7) -- (2.4,7) to[out=-115,in=90] (1.25,0) -- cycle ;

\draw [ultra thick,black](0,0) node[below] {$z{=}0$} -- +(0,7) ;
\draw [very thick,black](1,0) node[below]{1} -- +(0,7) ;
\draw [very thick,black](2,0) node[below]{2} -- +(0,7) ;
\draw [very thick,black](3,0) node[below]{3} -- +(0,7) ;
\draw [very thick,black](-1,0) node[below]{-1} -- +(0,7) ;
\draw [very thick,black](-2,0) node[below]{-2} -- +(0,7) ;
\draw [very thick,black](-3,0) node[below]{$z{=}$-3} -- +(0,7) ;
\draw [thick,->] (-5,1) -- node[left] {$\sigma'$} +(0,4) ;

\draw [ultra thick,red,dashed] 
(-4,1) -- (4,1) 
(-4.5,3) -- (4.5,3) 
(-5,5) -- (5,5)  ; 
\end{tikzpicture}
\caption{On the left hand side the coordinate system is adapted to the geodesic flow. In this case we derive (\ref{Dust_T_expan}). However, on the right hand side the coordinate system is not adapted to the geodesic flow, but is still adapted to the worldline at $z=0$. In this case we do not get (\ref{Dust_T_expan}).}
\label{fig_Dust_flow}
\end{figure}
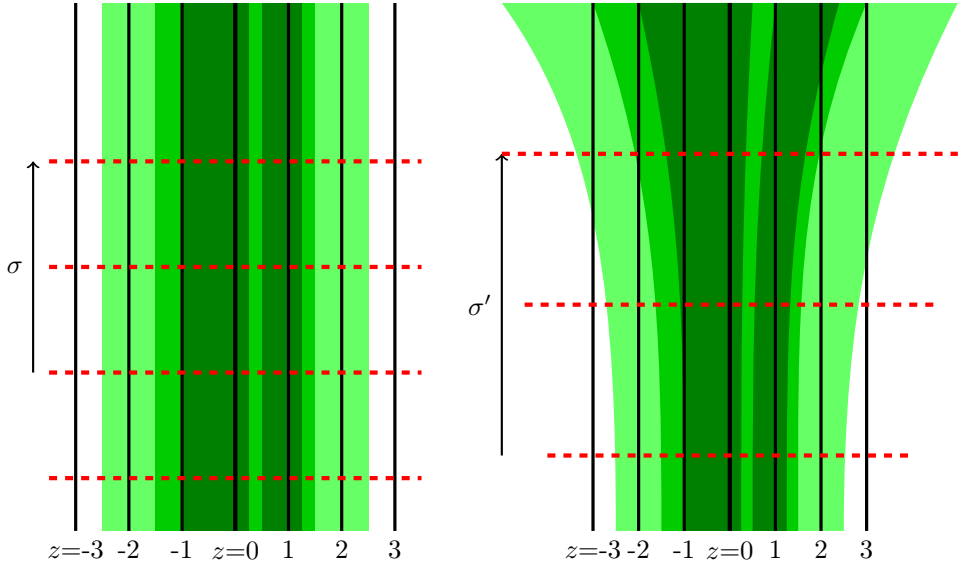

We observe that in the results of lemmas \ref{lm_Dust_div-free} and \ref{lm_Dust_check} the use of the coordinate system adapted to geodesic flow is purely for convenience and the result is independent of the coordinate system. This can be seen via the coordinate independent approach, given in section \ref{ch_DeRham} below. By contrast, the definition of $\calT^{\mu\nu}_\epsilon$ \eqref{Dust_reg_T_eps} depends on the coordinate system as seen in figure \ref{fig_Dust_flow}.


\section{Charged Dust}
\label{ch_CDust}

In contrast to uncharged dust, the stress-energy tensor is not divergence-free. This is because it is not a total stress-energy tensor. Instead we have
\begin{equation}
T_{\text{total}}^{\mu\nu}
=
T_{\text{mat}}^{\mu\nu} + T_{\text{EM}}^{\mu\nu}
\label{CDust__total}
\DEcomma
\end{equation}
where $T_{\text{total}}^{\mu\nu}$ is the total stress-energy tensor, and $T_{\text{mat}}^{\mu\nu}$ and $T_{\text{EM}}^{\mu\nu}$ are the contributions from the charged dust and the electromagnetic field. Since $\nabla_\mu T_{\text{total}}^{\mu\nu}=0$, then $\nabla_\mu T_{\text{mat}}^{\mu\nu}=-\nabla_\mu T_{\text{EM}}^{\mu\nu}$. 
Since for regular charged dust and electromagnetic field which are both smooth
\begin{align}
\nabla_\mu T_{\text{EM}}^{\mu\nu} 
=
- g^{\nu\rho} F^{\text{reg}}_{\rho\mu}\, J_{\text{reg}}^\mu\, 
\label{CDust_DivT_EM} 
\DEfullstop
\end{align}
Thus, we have
\begin{align}
\nabla_\mu T_{\text{mat}}^{\mu\nu} 
=
g^{\nu\rho} F^{\text{reg}}_{\rho\mu}\, J_{\text{reg}}^\mu\, 
\label{CDust_DivT_EM_Mat} 
\DEcomma
\end{align}
where the current $J_{\text{reg}}^\mu$ is given by Maxwell's equation
\begin{align}
J_{\text{reg}}^\mu = \nabla_\nu F_{\text{reg}}^{\nu\mu}
\label{CDust_Maxwell}
\DEfullstop
\end{align}
However, since we are dealing with Schwartz distributions, both $T^{\mu\nu}$ and $J^\mu$ are delta-functions which are infinite along the worldline. The problem is that, from (\ref{CDust_Maxwell}) the components of $F_{\rho\mu}$ also diverge as one approaches the worldline and thus (\ref{CDust_DivT_EM}) is not defined at the worldline. This leads to all the questions about what is the correct equation of motion when a point charged particle responds to its own electromagnetic field. In this article, we avoid this problem by making the $F_{\mu\nu}=F^{\text{Ext}}_{\mu\nu}$ an external electromagnetic field which does not satisfy (\ref{CDust_Maxwell}). Thus, we demand that a distributional stress-energy tensor $T^{\mu\nu}$, with a corresponding distributional current $J^\mu$, in the presence of an external electromagnetic $F^{\mu\nu}$ satisfies the divergence equation
\begin{align}
\nabla_\mu T^{\mu\nu} 
=
g^{\nu\rho} F_{\rho\mu}\, J^\mu\, 
\label{CDust_DivT} 
\DEfullstop
\end{align}

In~\cite{Gratus:2021izz} the problem of self interaction was solved by making each particle respond to the electromagnetic field of all the other particles. However, this approach relied on the fact that the components $F_{\mu\nu}\sim R^{-2}$ as one approached the worldline, where $R=\norm{\Vz}$ is the distance to the worldline.  However, since we are dealing with higher order multipoles then we would have $F_{\mu\nu}\sim R^{-k-2}$. This diverges to quickly and this approach will no longer work.

In Ellis representation and adapted coordinates the current is given by~\cite{Gratus:2020cok}
\begin{align}
J^\mu = \sum_{r=0}^k \frac{1}{r!} \gamma^{\mu a_1\cdots a_r}(\sigma)
\partial_{a_1} \dots \partial_{a_r} \delta ^{(3)}(\Vz)
\label{CDust_J_Multi}
\DEcomma
\end{align}
where $\gamma^{\mu a_1\cdots a_r}=\gamma^{\mu (a_1\cdots a_r)}$.
These are subject to the constraint arising from the conservation of charge
\begin{align}
\nabla_\mu J^\mu = 0
\label{CDust_Cons_charge}
\DEfullstop
\end{align}
At the octupole level 
\begin{align}
J^\mu = 
\gamma^{\mu}(\sigma)\delta ^{(3)}(\Vz)
+
\gamma^{\mu a}(\sigma)
\partial_{a} \delta ^{(3)}(\Vz)
+ \frac{1}{2}
\gamma^{\mu a b}(\sigma)
\partial_{a}\partial_b\delta ^{(3)}(\Vz)
+ \frac{1}{6}
\gamma^{\mu a b c}(\sigma)
\partial_{a}\partial_b\partial_c \delta ^{(3)}(\Vz)
\label{CDust_J_Octo}
\end{align}
and (\ref{CDust_J_Multi}) gives rise to the conditions
\begin{align}
\dot \gamma^{0} =0 \,,\quad
\dot \gamma^{0a} =-\gamma^{a} \,, \quad
\dot \gamma^{0 a b} = - 2\gamma^{(a b)} \,, \quad
\dot \gamma^{0 a b c} = - 3\gamma^{(a b c)} \,, \quad
\gamma^{(a b c d)} = 0
\label{CDust_Cons_charge_res}
\DEfullstop
\end{align}
The first of this equation implies the conservation of total charge
$\gamma^{0}  =q$. This is a very underdetermined system. At this octupole order there are $4\times(1+3+6+10)=80$ components with 15 algebraic equations, giving 65 unknowns. However, there are only 20 ODEs.

We wish to establish the general dynamic equations for the $\chi^{\mu\nu\dots}$ components of an arbitrary quadrupole stress-energy tensor. That is to generalise (\ref{QP_DTeqn_a000})-(\ref{QP_DTeqn_alg}). Since, in (\ref{CDust_DivT_EM}) we differentiate $T^{\mu\nu}$, to be most general we consider $J^\mu$ to be an octupole, $k=3$.
\begin{theorem}
\label{thm Charged SE quad}
The stress-energy quadrupole given by \eqref{Stress_Tmunu_quad} satisfies the divergence condition \eqref{CDust_DivT}, with current given by \eqref{CDust_J_Octo}, if and only if
\begin{align}
\label{oj11848n11}
\dot\chi^{\iMa}
+\Gamma^{\iMa}_{\iMb\iMc}\, \chi^{\iMc\iMb}
- \chi^{\iMc \iMb  \iSa}\,\partz_{\iSa}\Gamma^{\iMa}_{\iMb\iMc}
+\tfrac12
\chi^{\iMc \iMb \iSa \iSb}\partz_{\iSb}\partz_{\iSa}\Gamma^{\iMa}_{\iMb\iMc} 
& = \ga^{\rho } F^{\mu}{}_{\rho}   -\ga^{\rho a } \partial_{a}F^{\mu}{}_{\rho} +\tfrac{1}{2} \ga^{\rho a b }  \partial_{b}\partial_{a}F^{\mu}{}_{\rho} 
\nonumber \\
& \quad
-\tfrac{1}{6}\ga^{\rho a b c } \partial_{c} \partial_{b} \partial_{a} F^{\mu}{}_{\rho} \DEcomma
\\ \label{oj228848n22}
\dot \chi^{ \iMa 0 \iSa}
+\chi^{\iMa\iSa}
+ \Gamma^{\iMa}_{\iMb\iMc}\, \chi^{\iMc \iMb  \iSa}
-(\partz_{\iSb}\Gamma^{\iMa}_{\iMb\iMc})\, \chi^{\iMc \iMb \iSb \iSa} 
& = \ga^{\rho a }  F^{\mu}{}_{\rho}  - \ga^{\rho a b } \partial_{b}F^{\mu}{}_{\rho} \non
& \quad
+ \frac{1}{2}\ga^{\rho a b c }  \partial_{c}\partial_{b} F^{\mu}{}_{\rho}  \DEcomma  \\
\dot \chi^{\iMa 0 \iSa\iSb}
+ 2\chi^{\iMa (\iSb\iSa) } +\Gamma^{\iMa}_{\iMb\iMc}\, \chi^{\iMc \iMb \iSa \iSb} & = \ga^{\rho (a b) } F^{\mu}{}_{\rho}  -\ga^{\rho (a b) c }  \partial_{c}F^{\mu}{}_{\rho} \DEcomma
\label{QP_DTeqn_a0mn1} \\ \label{QP_DTeqn_a0mn19}
\chi^{\iMa (\iSa\iSb \iSc )} &=  \frac{1}{3}\ga^{\rho (a b c) } F^{\mu}{}_{\rho} 
\DEfullstop
\end{align}
\end{theorem}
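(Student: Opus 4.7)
The plan is to expand both sides of \eqref{CDust_DivT} as sums of $\delta^{(3)}(\Vz)$ and its spatial derivatives, and match coefficients. The independent generators are $\delta^{(3)}(\Vz)$, $\partial_a\delta^{(3)}(\Vz)$, $\partial_a\partial_b\delta^{(3)}(\Vz)$, $\partial_a\partial_b\partial_c\delta^{(3)}(\Vz)$ — one more order than for $T^{\mu\nu}$, which is exactly why $J^\mu$ is taken as an octupole. Using \eqref{Intro_ten_den} for a weight-one density with two upper indices, the two $\Gamma^\rho_{\mu\rho}$ contractions cancel to give
\[
\nabla_\mu T^{\mu\nu} = \partial_\mu T^{\mu\nu} + \Gamma^\nu_{\mu\rho}\, T^{\mu\rho}.
\]
Since each $\chi^{\mu\nu a_1\cdots a_r}$ depends only on $\sigma$, the bare-derivative piece splits cleanly: the $\partial_0$ contributions yield $\sigma$-derivatives $\dot\chi^{0\nu\dots}$, while the $\partial_a$ contributions add one spatial derivative to the delta, producing $\tfrac12\chi^{a\nu bc}\partial_a\partial_b\partial_c\delta^{(3)}$, $\chi^{a\nu b}\partial_a\partial_b\delta^{(3)}$ and $\chi^{a\nu}\partial_a\delta^{(3)}$.

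For the remaining $\Gamma^\nu_{\mu\rho}T^{\mu\rho}$ piece, and for the right-hand side $g^{\nu\rho}F_{\rho\mu}J^\mu$, I would apply iteratively the distributional identity $f(\Vz)\,\partial_a\delta^{(3)}(\Vz) = f(\Vzero)\,\partial_a\delta^{(3)}(\Vz) - (\partial_a f)(\Vzero)\,\delta^{(3)}(\Vz)$ to Taylor-expand the smooth coefficients $\Gamma^\nu_{\mu\rho}(\sigma,\Vz)$ and $g^{\nu\rho}F_{\rho\mu}(\sigma,\Vz)$ about $\Vz=\Vzero$. This puts the whole equation into the form $\sum_{r=0}^{3} A_r^{\nu\,a_1\cdots a_r}(\sigma)\,\partial_{a_1}\cdots\partial_{a_r}\delta^{(3)}(\Vz)=0$, with each $A_r^{\nu a_1\cdots a_r}$ automatically symmetric in $(a_1\cdots a_r)$. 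The coefficient of $\partial_a\partial_b\partial_c\delta^{(3)}$ receives no $\dot\chi$ or $\Gamma$ contribution — only $\tfrac12\chi^{a\nu bc}$ from the left and $\tfrac16\gamma^{\rho abc}F^\nu{}_\rho$ from the right survive — producing the algebraic identity \eqref{QP_DTeqn_a0mn19}. The coefficients of $\partial_a\partial_b\delta^{(3)}$, $\partial_a\delta^{(3)}$ and $\delta^{(3)}$ yield \eqref{QP_DTeqn_a0mn1}, \eqref{oj228848n22}, \eqref{oj11848n11} respectively, after using \eqref{Stress_Sym_comp1}--\eqref{Stress_Sym_comp2}. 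Both directions of the ``if and only if'' follow from the linear independence of $\{\partial^I\delta^{(3)}:|I|\le 3\}$ in $\cD'(\Real^3)$, tested against monomials $z^J$.

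The main obstacle is combinatorial bookkeeping. Each fixed distributional order receives contributions from several distinct sources — the bare derivative $\partial_\mu T^{\mu\nu}$, the Taylor expansion of $\Gamma^\nu_{\mu\rho}$ acting on each of the three multipole terms of $T^{\mu\nu}$, and the Taylor expansion of $F^\nu{}_\rho$ acting on each of the four multipole terms of $J^\mu$ — and the signs from the delta-product identity, the $1/r!$ factorials in the Ellis expansions, and the symmetrizations imposed by the commuting derivatives must mesh consistently with \eqref{Stress_Sym_comp1}--\eqref{Stress_Sym_comp2}. A useful consistency check, which also recovers lemma~\ref{lm_Dust_check}, is that setting every $\gamma^{\rho\dots}=0$ collapses the four equations to the uncharged system \eqref{QP_DTeqn_a000}--\eqref{QP_DTeqn_alg} already derived in~\cite{Gratus:2020cok}.
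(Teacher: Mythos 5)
Your proposal is correct and is essentially the paper's own argument in dual form: the paper pairs $\nabla_\mu T^{\mu\nu} - g^{\nu\rho}F_{\rho\mu}J^\mu$ with a test covector $\theta_\nu$, integrates by parts, and matches the coefficients of $\theta_\rho$, $\partial_a\theta_\rho$, $\partial_a\partial_b\theta_\rho$, $\partial_a\partial_b\partial_c\theta_\rho$ evaluated on the worldline, which is precisely your matching of the coefficients of $\delta^{(3)}$, $\partial_a\delta^{(3)}$, $\partial_a\partial_b\delta^{(3)}$, $\partial_a\partial_b\partial_c\delta^{(3)}$ read through the duality pairing. Your starting identity $\nabla_\mu T^{\mu\nu}=\partial_\mu T^{\mu\nu}+\Gamma^\nu_{\mu\rho}T^{\mu\rho}$, the reason the current must be taken to octupole order, and the observation that only $\tfrac12\chi^{c\nu ab}$ and $\tfrac16\gamma^{\rho abc}F^\nu{}_\rho$ survive at third order all agree with the paper's computation.
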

\begin{proof}

We have that 
\begin{equation} \label{eq34}
\int_\Mman (\nabla_\iMa T^{\iMa\iMb} )\, \ToneTen_{\iMb}\,d^4 x  = \int_\Mman (g^{\nu\rho} F_{\rho\mu}\, J^\mu )\, \ToneTen_{\iMb}\,d^4 x 
\DEcomma
\end{equation}
where $\theta^{\nu}$ is a test vector.

Then
\begin{align*}
&   \int_\Mman (\nabla_\iMa T^{\iMa\iMb} )\, \ToneTen_{\iMb}\,d^4 x \non
& =
\int_\Mman 
\big(\partial_\iMa T^{\iMa\iMb} + 
\Gamma^\iMb_{\iMa\iMc} T^{\iMa\iMc} \big)\, 
\ToneTen_{\iMb}\,d^4 x \non
& =
\int_\Mman 
T^{\iMa\iMb} \big(\Gamma^\iMc_{\iMa\iMb} \,\ToneTen_{\iMc}-\partial_\iMa \ToneTen_{\iMb} \big)\, 
d^4 x
\\&=
\int_\Mman 
\Big(\chi^{\iMa\iMb} \,\deltaThree(\Vz)
+
\chi^{\iMa\iMb \iSa}\, \partz_\iSa \deltaThree(\Vz)
+
\tfrac12
\chi^{\iMa\iMb\iSa\iSb}\,
\partz_\iSa\partz_\iSb 
\deltaThree(\Vz)
\Big)
\big(\Gamma^\iMc_{\iMa\iMb} \,\ToneTen_{\iMc}-\partial_\iMa \ToneTen_{\iMb} \big)\, 
d^4 x
\\&=
\int_\Interval d\sigma\Big(
\chi^{\iMa\iMb}\,
\big(\Gamma^\iMc_{\iMa\iMb} \,\ToneTen_{\iMc}-\partial_\iMa \ToneTen_{\iMb} \big)
-
\chi^{\iMa\iMb\iSa}\,\partz_\iSa
\big(\Gamma^\iMc_{\iMa\iMb} \,\ToneTen_{\iMc}-\partial_\iMa \ToneTen_{\iMb} \big)
+
\tfrac12 
\chi^{\iMa\iMb\iSa\iSb}
\partz_\iSa \partz_\iSb
\big(\Gamma^\iMc_{\iMa\iMb} \,\ToneTen_{\iMc}-\partial_\iMa \ToneTen_{\iMb} \big)
\Big)
\\&=
\int_\Interval d\sigma\Big(
\chi^{\iMa\iMb}\,\Gamma^\iMc_{\iMa\iMb} \,\ToneTen_{\iMc}
-
\chi^{\iSa\iMb}\,\partial_\iSa \ToneTen_{\iMb} 
+
\dot\chi^{0\iMb}\,\ToneTen_{\iMb} 
\\&\qquad\qquad
-
\chi^{\iMa\iMb\iSa}\,\partz_\iSa
\big(\Gamma^\iMc_{\iMa\iMb} \,\ToneTen_{\iMc}\big)
+
\chi^{\iSb\iMb\iSa}\,\partz_\iSa
\partial_\iSb \ToneTen_{\iMb} 
-
\dot\chi^{0\iMb\iSa}\,\partz_\iSa
\ToneTen_{\iMb} 
\\&\qquad\qquad
+
\tfrac12 
\chi^{\iMa\iMb\iSa\iSb}
\partz_\iSa \partz_\iSb
\big(\Gamma^\iMc_{\iMa\iMb} \,\ToneTen_{\iMc}\big)
-
\tfrac12 
\chi^{\iSc\iMb\iSa\iSb}
\partz_\iSa \partz_\iSb
\partial_\iSc \ToneTen_{\iMb} 
+
\tfrac12 
\dot\chi^{0\iMb\iSa\iSb}
\partz_\iSa \partz_\iSb
\ToneTen_{\iMb} 
\Big)
\\&=
\int_\Interval d\sigma\Big(
\chi^{\iMa\iMb}\,\Gamma^\iMc_{\iMa\iMb} \,\ToneTen_{\iMc}
-
\chi^{\iSa\iMb}\,\partial_\iSa \ToneTen_{\iMb} 
+
\dot\chi^{0\iMc}\,\ToneTen_{\iMc} 
\\&\qquad\qquad
-
\chi^{\iMa\iMb\iSa}\,(\partial_\iSa\Gamma^\iMc_{\iMa\iMb}) \,\ToneTen_{\iMc}
-
\chi^{\iMa\iMb\iSa}\,
\Gamma^\iMc_{\iMa\iMb} \,\partial_\iSa\ToneTen_{\iMc}
+
\chi^{\iSb\iMb\iSa}\,\partial_\iSa
\partial_\iSb \ToneTen_{\iMb} 
-
\dot\chi^{0\iMb\iSa}\,\partial_\iSa
\ToneTen_{\iMb} 
\\&\qquad\qquad
+
\tfrac12 
\chi^{\iMa\iMb\iSa\iSb}
\big(\partial_\iSa \partial_\iSb
\Gamma^\iMc_{\iMa\iMb}\big)\ToneTen_{\iMc}
+
\chi^{\iMa\iMb\iSa\iSb}
\big(\partial_\iSa 
\Gamma^\iMc_{\iMa\iMb} \big)
\,\big(\partial_\iSb\ToneTen_{\iMc}\big)
+
\tfrac12 
\chi^{\iMa\iMb\iSa\iSb}
\Gamma^\iMc_{\iMa\iMb} \partial_\iSa \partial_\iSb
\ToneTen_{\iMc}
\\&\qquad\qquad
-
\tfrac12 
\chi^{\iSc\iMb\iSa\iSb}
\partial_\iSa \partial_\iSb
\partial_\iSc \ToneTen_{\iMb} 
+
\tfrac12 
\dot\chi^{0\iMb\iSa\iSb}
\partial_\iSa \partial_\iSb
\ToneTen_{\iMb} 
\Big)
\\&=
\int_\Interval d\sigma\bigg(
\ToneTen_{\iMc}
\Big(\chi^{\iMa\iMb}\,\Gamma^\iMc_{\iMa\iMb} 
+
\dot\chi^{\iMc}
-
\chi^{\iMa\iMb\iSa}\,(\partial_\iSa\Gamma^\iMc_{\iMa\iMb})
+
\tfrac12 
\chi^{\iMa\iMb\iSa\iSb}
\big(\partial_\iSa \partial_\iSb
\Gamma^\iMc_{\iMa\iMb}\big)
\Big)
\nonumber \\
&\qquad\qquad
-\partial_\iSa \ToneTen_{\iMc} \Big(
\chi^{\iSa\iMc}\,
+
\chi^{\iMa\iMb\iSa}\,\Gamma^\iMc_{\iMa\iMb} 
+
\dot\chi^{0\iMc\iSa}
-
\chi^{\iMa\iMb\iSb\iSa}
\big(\partial_\iSb 
\Gamma^\iMc_{\iMa\iMb} \big)
\Big)
\nonumber \\
&\qquad\qquad
+\partial_\iSa\partial_\iSb \ToneTen_{\iMc} 
\Big(\chi^{\iSb\iMc\iSa}
+
\tfrac12 
\chi^{\iMa\iMb\iSa\iSb}
\Gamma^\iMc_{\iMa\iMb}
+
\tfrac12 
\dot\chi^{0\iMc\iSa\iSb}
\Big)
-
\tfrac12 
\chi^{\iSc\iMb\iSa\iSb}
\partial_\iSa \partial_\iSb
\partial_\iSc \ToneTen_{\iMb} 
\bigg) 
\DEfullstop
\end{align*}

Moreover,
\begin{align*}
&   \int_\Mman g^{\nu\rho} F_{\rho\mu}\, J^\mu\, \ToneTen_{\iMb}\,d^4 x 
\\
& = \int_\Mman g^{\nu\rho} F_{\rho\mu}\, \Big(\gamma^{\mu}(\sigma)\delta ^{(3)}(\Vz)
+
\gamma^{\mu a}(\sigma)
\partial_{a} \delta ^{(3)}(\Vz)
+ \frac{1}{2}
\gamma^{\mu a b}(\sigma)
\partial_{a}\partial_b\delta ^{(3)}(\Vz)
\non &\qquad\qquad\qquad \quad
+ \frac{1}{6}
\gamma^{\mu a b c}(\sigma)
\partial_{a}\partial_b\partial_c \delta ^{(3)}(\Vz)\Big)\, \ToneTen_{\iMb}\,d^4 x 
\\
& = \int_\Mman \, \Big(\gamma^{\mu}(\sigma)\delta ^{(3)}(\Vz)
+
\gamma^{\mu a}(\sigma)
\partial_{a} \delta ^{(3)}(\Vz)
+ \frac{1}{2}
\gamma^{\mu a b}(\sigma)
\partial_{a}\partial_b\delta ^{(3)}(\Vz)
\\
& \qquad\qquad\qquad\quad + \frac{1}{6}
\gamma^{\mu a b c}(\sigma)
\partial_{a}\partial_b\partial_c \delta ^{(3)}(\Vz)\Big)\, \Big(g^{\nu\rho} F_{\rho\mu} \ToneTen_{\iMb}\Big)\,d^4 x 
\\
& = \int_\Interval d\sigma \Big(\gamma^{\mu}g^{\nu\rho} F_{\rho\mu} \ToneTen_{\iMb}-g^{\nu\rho}\ga^{\mu a}\p_{a}(F_{\rho \mu})\ToneTen_{\iMb}-g^{\nu\rho}\ga^{\mu a}F_{\rho \mu}\p_{a}\ToneTen_{\iMb}
\\
& \qquad\qquad\qquad\quad +\frac{1}{2}g^{\nu\rho}\ga^{\mu a b}\p_{a}\p_{b}(F_{\rho \mu})\ToneTen_{\iMb}+\frac{1}{2}g^{\nu\rho}\ga^{\mu a b}F_{\rho \mu}\p_{a}\p_{b}\ToneTen_{\iMb}\\
& \qquad\qquad\qquad\quad + \frac{1}{2}g^{\nu\rho}\ga^{\mu a b}\p_{a}(F_{\rho \mu})\p_{b}\ToneTen_{\iMb}+\frac{1}{2}g^{\nu\rho}\ga^{\mu a b}\p_{b}(F_{\rho \mu})\p_{a}\ToneTen_{\iMb}\\
& \qquad\qquad\qquad\quad -\frac{1}{6}g^{\nu\rho}\ga^{\mu a b c}\p_{a}\p_{b}\p_{c}(F_{\rho \mu})\ToneTen_{\iMb}-\frac{1}{6}g^{\nu\rho}\ga^{\mu a b c}F_{\rho \mu}\p_{a}\p_{b}\p_{c}\ToneTen_{\iMb}\\
& \qquad\qquad\qquad\quad - \frac{1}{6}g^{\nu\rho}\ga^{\mu a b c}\p_{a}\p_{b}(F_{\rho \mu})\p_{c}\ToneTen_{\iMb}-\frac{1}{6}g^{\nu\rho}\ga^{\mu a b c}\p_{c}(F_{\rho \mu})\p_{a}\p_{b}\ToneTen_{\iMb}\\
& \qquad\qquad\qquad\quad -\frac{1}{6}g^{\nu\rho}\ga^{\mu a b c}\p_{b}\p_{c}(F_{\rho \mu})\p_{a}\ToneTen_{\iMb}-\frac{1}{6}g^{\nu\rho}\ga^{\mu a b c}\p_{a}(F_{\rho \mu})\p_{b}\p_{c}\ToneTen_{\iMb}\\
& \qquad\qquad\qquad\quad - \frac{1}{6}g^{\nu\rho}\ga^{\mu a b c}\p_{a}\p_{c}(F_{\rho \mu})\p_{b}\ToneTen_{\iMb}-\frac{1}{6}g^{\nu\rho}\ga^{\mu a b c}\p_{b}(F_{\rho \mu})\p_{a}\p_{c}\ToneTen_{\iMb}\Big)\\
& = \int_\Interval d\sigma \Big(\theta_{\rho}\Big(\gamma^{\mu}g^{\rho \nu} F_{\nu\mu}-g^{\rho \nu}\ga^{\mu a}\p_{a}(F_{\nu \mu})\\
& \qquad\qquad\qquad\quad + \frac{1}{2}g^{\rho \nu}\ga^{\mu a b}\p_{a}\p_{b}(F_{\nu \mu}) -\frac{1}{6}g^{\rho \nu}\ga^{\mu a b c}\p_{a}\p_{b}\p_{c}(F_{\nu \mu})\Big)\\
& \qquad\qquad\qquad\quad - \p_{a} \theta_{\rho}\Big(g^{\rho \nu}\ga^{\mu a}F_{\nu \mu}- \frac{1}{2}g^{\rho \nu}\ga^{\mu a b}\p_{b}(F_{\nu \mu})\\
& \qquad\qquad\qquad\quad -\frac{1}{2}g^{\rho \nu}\ga^{\mu b a}\p_{b}(F_{\nu \mu})+\frac{1}{6}g^{\rho \nu}\ga^{\mu a b c}\p_{b}\p_{c}(F_{\nu \mu})\\
& \qquad\qquad\qquad\quad + \frac{1}{6}g^{\rho \nu}\ga^{\mu b a c}\p_{b}\p_{c}(F_{\nu \mu}) +\frac{1}{6}g^{\rho \nu}\ga^{\mu c b a}\p_{b}\p_{c}(F_{\nu \mu})\Big) \\
& \qquad\qquad\qquad\quad + \p_{a}\p_{b} \theta_{\rho}\Big(\frac{1}{2}g^{\rho \nu}\ga^{\mu a b}F_{\nu \mu}-\frac{1}{6}g^{\rho \nu}\ga^{\mu a b c}\p_{c}(F_{\nu \mu})\\
& \qquad\qquad\qquad\quad - \frac{1}{6}g^{\rho \nu}\ga^{\mu  b a c}\p_{c}(F_{\nu \mu})-\frac{1}{6}g^{\rho \nu}\ga^{\mu c b a}\p_{c}(F_{\nu \mu})\Big) \\
& \qquad\qquad\qquad\quad - \frac{1}{6}\p_{a}\p_{b}\p_{c} \theta_{\rho}\Big(g^{\rho \nu}\ga^{\mu a b c}F_{\nu \mu}\Big)\Big)\\
& = \int_\Interval d\sigma \Big(\theta_{\rho}\Big(\gamma^{\mu}F^{\rho}{}_{\mu}-\ga^{\mu a}\p_{a}F^{\rho}{}_{\mu}\\
& \qquad\qquad\qquad\quad + \frac{1}{2}\ga^{\mu a b}\p_{a}\p_{b}F^{\rho}{}_{\mu} -\frac{1}{6}\ga^{\mu a b c}\p_{a}\p_{b}\p_{c}F^{\rho}{}_{\mu}\Big)\\
& \qquad\qquad\qquad\quad - \p_{a} \theta_{\rho}\Big(\ga^{\mu a}F^{\rho}{}_{\mu}- \ga^{\mu a b}\p_{b}F^{\rho}{}_{\mu}\\
& \qquad\qquad\qquad\quad +\frac{1}{2}\ga^{\mu a b c}\p_{b}\p_{c}F^{\rho}{}_{\mu} \Big) \\
& \qquad\qquad\qquad\quad + \p_{a}\p_{b} \theta_{\rho}\Big(\frac{1}{2}\ga^{\mu a b}F^{\rho}{}_{\mu}-\frac{1}{2}\ga^{\mu a b c}\p_{c}F^{\rho}{}_{\mu} \Big)\\
& \qquad\qquad\qquad\quad - \frac{1}{6}\p_{a}\p_{b}\p_{c} \theta_{\rho}\Big(\ga^{\mu a b c}F^{\rho}{}_{\mu}\Big)\Big)
\DEfullstop
\end{align*}

Thus, equating the two sides in~\eqref{eq34}, we obtain~\eqref{oj11848n11}~-~\eqref{QP_DTeqn_a0mn19}.

\end{proof}

It is trivial to see that if $F_{\mu \nu}=0$ in~\eqref{oj11848n11}-\eqref{QP_DTeqn_a0mn19} then we recover \eqref{QP_DTeqn_a000}-\eqref{QP_DTeqn_alg}. Thus there are 40 ODEs and $60$ $\chi^{\mu\nu\ldots}$ components as in the uncharged case. If both $\chi^{\mu\nu\ldots}$ and $\gamma^{\mu\ldots}$ are unknown, then combining with the conservation of charge we have 60 ODEs for 125 unknowns.

\subsection{Charged dust stress-energy tensor and current}

Here we formulate the charged dust multipole stress-energy tensor and current. We again work in an adapted coordinate system $(\sigma,z^1,z^2,z^3)$. However, this time each curve given by $\underline{z}=\text{constant}$ is a solution to the Lorentz force equation with the same ratio $q/m$.
Hence, the Christoffel symbols satisfy
\begin{align}
\Gamma^\mu_{00} = \frac{q}{m} F^{\mu}{}_0
\label{CDust_Chrisoffel}
\DEfullstop
\end{align}

The stress-energy tensor density (of weight 1) has the same structure as (\ref{Dust_Tmunu_dust}) but in this new coordinate system. That is
\begin{align}
T^{\mu\nu} 
=
m \,\delta^\mu_0 \delta^\nu_0 \sum_{r=0}^k 
\frac{1}{r!}
{{Z}}^{a_1\cdots a_r} 
\partial_{a_1} \cdots \partial_{a_r} \delta^{(3)}(\Vz)
\label{CDust_Tmu}
\DEfullstop
\end{align}
where $Z^\emptyset=1$ and the $Z^{a_1\cdots a_r}$ are constants. In this model the distribution of charge is the same as the distribution of matter. Thus we are considering only a single species of charged particle. 
The current density (of weight 1) is given by
\begin{align}
J^\mu = q \, \delta^\mu_0 \sum_{r=0}^k \frac{1}{r!} {{Z}}^{a_1\cdots a_r}
\partial_{a_1} \dots \partial_{a_r} \delta ^{(3)}(\Vz)
\label{CDust_Jdust_Multi}
\DEfullstop
\end{align}
We can see that (\ref{CDust_Jdust_Multi}) trivially satisfies the conservation of charge \eqref{CDust_Cons_charge}, since using $\nabla_\mu \delta_0^\mu=0$
\begin{align}
\nabla_\mu J^\mu
&=
q \delta^\mu_0 \sum \partial_\mu(\tfrac{1}{r!} {{Z}}^{a_1\cdots a_r}
\partial_{a_1} \dots \partial_{a_r} \delta ^{(3)}(\Vz))
\nonumber \\
&= q \sum \partial_0(\tfrac{1}{r!} {{Z}}^{a_1\cdots a_r}
\partial_{a_1} \dots \partial_{a_r} \delta ^{(3)}(\Vz)) \nonumber \\
&=0
\DEfullstop
\end{align}

\begin{theorem}
\label{lm_CDust_conserv}
The stress-energy tensor \eqref{CDust_Tmu} satisfies the divergence condition \eqref{CDust_DivT_EM_Mat} where the current is given by \eqref{CDust_Jdust_Multi}.
\end{theorem}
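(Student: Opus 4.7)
The strategy is to mirror the uncharged-dust calculation from Lemma~\ref{lm_Dust_div-free}, exploiting the fact that the adapted coordinate system has been chosen precisely so that the Lorentz-force condition \eqref{CDust_Chrisoffel} takes over the role played by the geodesic condition \eqref{Dust_Chrisoffel}. Concretely, I would assign the density weight to the factor $\delta^\mu_0$ (so that $\nabla_\mu\delta^\mu_0=0$ by \eqref{Intro_ten_res}) and treat every other factor in \eqref{CDust_Tmu}, including $\delta^\nu_0$, as a weight-zero object. Applying the Leibniz rule for $\nabla_\mu$ to \eqref{CDust_Tmu} then leaves only three candidate terms: the one where the derivative hits $\delta^\mu_0$ (which vanishes), the one where it hits $\delta^\nu_0$, and the one where it acts as $\partial_\mu$ on the scalar sum.

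The first nontrivial contribution gives $m\,\delta^\mu_0\,\nabla_\mu\delta^\nu_0$ times the sum. Since $\delta^\nu_0$ is now a bare tensor, $\nabla_\mu\delta^\nu_0=\Gamma^\nu_{\mu\rho}\delta^\rho_0=\Gamma^\nu_{\mu 0}$, and the contraction with $\delta^\mu_0$ collapses this to $\Gamma^\nu_{00}$, which by \eqref{CDust_Chrisoffel} equals $(q/m)F^\nu{}_0$. The second nontrivial contribution is
\[
m\,\delta^\nu_0\,\delta^\mu_0\,\partial_\mu\!\left(\sum_{r=0}^k\tfrac{1}{r!}Z^{a_1\cdots a_r}\partial_{a_1}\cdots\partial_{a_r}\delta^{(3)}(\Vz)\right)
=m\,\delta^\nu_0\,\partial_0(\,\cdots),
\]
and this vanishes because the $Z^{a_1\cdots a_r}$ are constants and $\delta^{(3)}(\Vz)$ has no $\sigma$-dependence. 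Combining the surviving term gives
\[
\nabla_\mu T^{\mu\nu}=q\,F^\nu{}_0\sum_{r=0}^k\tfrac{1}{r!}Z^{a_1\cdots a_r}\partial_{a_1}\cdots\partial_{a_r}\delta^{(3)}(\Vz).
\]

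To close the argument I would then plug \eqref{CDust_Jdust_Multi} into the right-hand side of \eqref{CDust_DivT_EM_Mat}: the contraction $g^{\nu\rho}F_{\rho\mu}J^\mu$ picks out $\mu=0$ because of the $\delta^\mu_0$ in $J^\mu$, leaving $q\,g^{\nu\rho}F_{\rho 0}=q\,F^\nu{}_0$ multiplying exactly the same sum of $Z$-weighted derivatives of $\delta^{(3)}(\Vz)$. The two expressions coincide identically, establishing the theorem. There is really no hard step here — the only thing that has to be checked carefully is the density-weight bookkeeping (the reason the $\omega=\sqrt{-\det g}$ factors do not intrude is exactly because the weight was absorbed into $\delta^\mu_0$); the choice of coordinates adapted to a congruence of Lorentz-force worldlines with common $q/m$ converts what would otherwise be a messy identity into a one-line substitution.
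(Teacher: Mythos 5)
Your proposal is correct and follows essentially the same route as the paper's proof: assign the density weight to $\delta^\mu_0$ so that $\nabla_\mu\delta^\mu_0=0$, apply the Leibniz rule so that only the $\nabla_\mu\delta^\nu_0$ term survives, use $\Gamma^\nu_{00}=(q/m)F^\nu{}_0$ from \eqref{CDust_Chrisoffel}, and recognise the remaining sum as $q^{-1}$ times $J^\mu$ contracted into $F^\nu{}_\mu$. The only cosmetic difference is that the paper rewrites $qF^\nu{}_0(\cdots)$ directly as $g^{\nu\rho}F_{\rho\mu}J^\mu$ in a single chain of equalities, whereas you evaluate both sides separately and compare; the content is identical.
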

\begin{proof}
We assume that $\delta^\mu_0$ is the factor with the weight 1, so that $\nabla_\mu \delta_0^\mu = 0$.
\begin{align*}
\nabla_\mu T^{\mu\nu}
&=
m(\nabla_\mu \delta^\mu_0) \delta^\nu_0 \sum_{r=0}^k \frac{1}{r!} {{Z}}^{a_1\cdots a_r} 
\partial_{a_1} \cdots \partial_{a_r} \delta^{(3)}(\Vz)
\\
& \quad  + m
\delta^\mu_0 (\nabla_\mu \delta^\nu_0) \sum_{r=0}^k \frac{1}{r!} {{Z}}^{a_1\cdots a_r} 
\partial_{a_1} \cdots \partial_{a_r} \delta^{(3)}(\Vz)
\\
& \quad  + m
 \delta^\mu_0 \delta^\nu_0 \sum_{r=0}^k \frac{1}{r!} \partial_{\mu} \big({{Z}}^{a_1\cdots a_r} 
\partial_{a_1} \cdots \partial_{a_r} \delta^{(3)}(\Vz)\big)
\\&=
m\Gamma^\nu_{00} \sum_{r=0}^k \frac{1}{r!} {{Z}}^{a_1\cdots a_r} 
\partial_{a_1} \cdots \partial_{a_r} \delta^{(3)}(\Vz)
\\
& = m(q/m) F^{\nu}{}_{0}\sum_{r=0}^k \frac{1}{r!} {{Z}}^{a_1\cdots a_r} 
\partial_{a_1} \cdots \partial_{a_r} \delta^{(3)}(\Vz)\\
& = q F^{\nu}{}_{\mu}\delta^{\mu}_{0}\sum_{r=0}^k \frac{1}{r!} {{Z}}^{a_1\cdots a_r} 
\partial_{a_1} \cdots \partial_{a_r} \delta^{(3)}(\Vz)\\
& = q g^{\nu \rho}F_{\rho \mu}\delta^{\mu}_{0}\sum_{r=0}^k \frac{1}{r!} {{Z}}^{a_1\cdots a_r} 
\partial_{a_1} \cdots \partial_{a_r} \delta^{(3)}(\Vz)\\
& = g^{\nu \rho}F_{\rho \mu} J^{\mu}
\DEfullstop
\end{align*}
\end{proof}

The dust stress-energy tensor \eqref{CDust_Tmu} and current to quadrupole \eqref{CDust_J_Multi} order are given by
\begin{align}
T^{\mu\nu} = m \delta^\mu_0 \delta^\nu_0
\Big(
\delta(\Vz) +
{{Z}}^a \partial_{a} \delta(\Vz) +
{{Z}}^{ab} \partial_{a}\partial_b \delta(\Vz)
\Big)
\label{CDust_T_quad}
\end{align}
and
\begin{align}
J^{\mu} = q \delta^\mu_0
\Big(
\delta(\Vz) +
{{Z}}^a \partial_{a} \delta(\Vz) +
{{Z}}^{ab} \partial_{a}\partial_b \delta(\Vz)
\Big)
\label{CDust_J_quad}
\DEfullstop
\end{align}

\begin{lemma}
The charged dust stress-energy quadrupole \eqref{CDust_T_quad} and current quadrupole \eqref{CDust_J_quad} satisfy the ODEs
\eqref{oj11848n11}-\eqref{QP_DTeqn_a0mn19}.
\end{lemma}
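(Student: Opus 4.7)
The plan is a direct substitution check. From \eqref{CDust_T_quad} and \eqref{CDust_J_quad}, compared with the Ellis expansion \eqref{Stress_Tmunu_quad}, the nonzero multipole components of the charged dust quadrupole are
\begin{equation*}
\chi^{\mu\nu a_1\cdots a_r} \;=\; m\,\delta^\mu_0\delta^\nu_0\,Z^{a_1\cdots a_r},
\qquad
\gamma^{\mu a_1\cdots a_r} \;=\; q\,\delta^\mu_0\,Z^{a_1\cdots a_r},
\qquad r=0,1,2,
\end{equation*}
with $Z^{\emptyset}=1$ and $\gamma^{\mu abc}=0$, since the current in \eqref{CDust_J_quad} is only a quadrupole.

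Three observations then make the check mechanical. First, every $\dot\chi^{\mu 0\cdots}$ on the left-hand sides of \eqref{oj11848n11}--\eqref{QP_DTeqn_a0mn1} vanishes, because the $Z^{a_1\cdots a_r}$ are constants and the factors $\delta^\cdot_0$ are $\sigma$-independent. Second, every contraction of a Christoffel symbol (or its $z$-derivatives) with a $\chi$ collapses, via the two factors $\delta^\cdot_0$, to $m$ times the same derivative pattern of $\Gamma^\mu_{00}$ multiplied by the appropriate $Z$. Third, the adapted-coordinate identity \eqref{CDust_Chrisoffel}, $\Gamma^\mu_{00}=(q/m)F^\mu{}_0$, converts each such term into exactly the form of the corresponding $\gamma^{\rho\cdots}F^\mu{}_\rho$ piece on the right-hand side, because the factor $\delta^\rho_0$ in $\gamma^{\rho\cdots}$ also projects $\rho\to 0$.

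Applying these to each equation in Theorem~\ref{thm Charged SE quad}: \eqref{oj11848n11} reduces on both sides to $qF^\mu{}_0 - qZ^a\partial_a F^\mu{}_0 + \tfrac12qZ^{ab}\partial_a\partial_b F^\mu{}_0$ (the octupole term on the RHS being absent); \eqref{oj228848n22} reduces on both sides to $qZ^a F^\mu{}_0 - qZ^{ab}\partial_b F^\mu{}_0$, after using $\chi^{\mu a}=m\delta^\mu_0\delta^a_0=0$ to kill the bare $\chi^{\mu a}$ term; \eqref{QP_DTeqn_a0mn1} reduces to $qZ^{ab}F^\mu{}_0$ on both sides by the symmetry $Z^{(ab)}=Z^{ab}$; and the algebraic constraint \eqref{QP_DTeqn_a0mn19} is the identity $0=0$, the LHS vanishing because a spatial index in the second spacetime slot of $\chi^{\mu\nu ab}$ forces $\delta^\nu_0=0$, and the RHS vanishing because $\gamma^{\mu abc}=0$.

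No substantive obstacle arises. The only point requiring attention is the compressed index convention under which a Latin letter appearing as the second superscript of $\chi^{\mu\cdots}$ actually denotes a spacetime index that happens to be spatial; keeping track of this is what makes several terms manifestly vanish. Matching the combinatorial coefficients between the $1/r!$ factors in \eqref{Stress_Tmunu_quad} and the $(a_1\ldots a_r)$ symmetrisations on the right-hand sides of Theorem~\ref{thm Charged SE quad} is routine bookkeeping.
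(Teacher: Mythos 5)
Your proposal is correct and follows essentially the same route as the paper's own proof: direct substitution of $\chi^{\mu\nu a_1\cdots a_r}=m\,\delta^\mu_0\delta^\nu_0 Z^{a_1\cdots a_r}$ and $\gamma^{\mu a_1\cdots a_r}=q\,\delta^\mu_0 Z^{a_1\cdots a_r}$ into the quadrupole ODEs, with the $\dot\chi$ terms dying because the $Z$'s are constant, the $\chi$'s with a spatial index in a spacetime slot vanishing, and the adapted-coordinate relation $\Gamma^\mu_{00}=(q/m)F^\mu{}_0$ together with its spatial derivatives producing the exact cancellation between the gravitational and electromagnetic terms. Your write-up is in fact somewhat more explicit than the paper's, which states the same observations without displaying the reduced forms of each equation.
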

\begin{proof}
We wish to verify that the equations~\eqref{oj11848n11}~-~\eqref{QP_DTeqn_a0mn19} hold. The $\dot{\chi}^{\mu\nu\ldots}$ terms vanish while the $\chi^{\mu\nu\ldots}$ terms are non-zero when the first two indices are equal to zero. The condition on the Christoffel symbols for charged dust
is~\eqref{CDust_Chrisoffel}.

As one can see in equations~\eqref{oj11848n11}~and~\eqref{oj228848n22}, there are exact cancellations between the gravitational and electromagnetic terms in the case of a dust model.

One should also keep in mind the fact
that, for a given multipole order, the $\chi$ and $\gamma$ terms encompass the same $X$ constants
since
\begin{equation}
\chi^{\mu \nu a_1\cdots a_r} = m \delta^{\mu}_{0} \delta^{\nu}_{0} {{Z}}^{a_1\cdots a_r}
\end{equation}
and
\begin{equation}
\gamma^{\mu a_1\cdots a_r} = q \delta^{\mu}_{0} {{Z}}^{a_1\cdots a_r}
\DEfullstop
\end{equation}



Moreover, differentiating~\eqref{CDust_Chrisoffel} with respect to a spatial component gives
\begin{equation} \label{981}
\partial_{b}(\Gamma^{\mu}_{0 0}) = (q/m)\partial_{b}(F^{\mu}{}_{0})
\end{equation}
and
\begin{equation} \label{9811}
\partial_{b}\partial_{c}(\Gamma^{\mu}_{0 0}) = (q/m)\partial_{b}\partial_{c}(F^{\mu}{}_{0}) 
\DEfullstop
\end{equation}

In equation~\eqref{QP_DTeqn_a0mn1}, exact cancellations also arise.
Both sides of equation~\eqref{QP_DTeqn_a0mn19} vanish as well.
\end{proof}

To demonstrate that this is charged dust, one can repeat the calculation in lemma \ref{lm_Dust_expan}. In addition to confirm the current one can squeeze the regular current given by
\begin{align}
{\cal J}^{\mu\nu}_\epsilon
=
\epsilon^{-3}
\varrho(\epsilon^{-1} z) \delta^\nu_0 \omega
\label{CDust_Reg_Curr}
\DEfullstop
\end{align}

\section{The coordinate free de Rham formulation of the stress-energy tensor}
\label{ch_DeRham}

The results in this article can all be reproduced in a coordinate free notation using the language of differential geometry and de Rham currents. This is very useful when one needs to express distributional quantities such as the current and stress-energy tensor in a coordinate system which is not adapted to the flow (e.g. a coordinate system adapted to the observer, rather than the source). The transformation of the components for these quantities under change of coordinates is complicated \cite{Gratus:2020cok,Gratus:2018kyo}, involving higher order derivatives and integrals. By using a coordinate free notation, the components in a preferred coordinate system can then be extracted.
The detail of how to construct the stress-energy distribution is given in \cite[Section 6]{Gratus:2020cok}. 

Even though all the work can be repeated in this language, here we only reproduce the key result, theorem \ref{lm_CDust_conserv}. In this section we will use the coordinate free covariant derivative $\Bnabla$, also defined in \cite[Section 6]{Gratus:2020cok}.

Recall the stress-energy vector valued distribution $\tau$, so that it acts on test tensors of type (0,2) as arguments. That is $\tau[\beta\otimes\alpha]\in\Real$. This is symmetric so that
\begin{align}
\tau[\beta\otimes\alpha]=\tau[\alpha\otimes\beta]
\label{CoFree_SE_tau_symm}
\end{align}
and the divergenceless condition is given by
\begin{align}
D\tau=0
\label{CoFree_SE_Dtau=0}
\DEcomma
\end{align}
where ($\theta$ is a 1--form valued scalar)
\begin{align}
(D\tau)[\theta] = -\tau[D\theta]
\label{CoFree_SE_def_Dtau}
\DEcomma
\end{align}
and
\begin{align}
(D\theta)(U,V) = (\Bnabla_V\theta):U
\label{CoFree_SE_def_Dphi}
\DEcomma
\end{align}
where $\alpha:U$ is the internal product between the 1--form $\alpha$
and the vector $U$.

The right hand side of (\ref{CDust_DivT_EM_Mat}) is written
$\calF\wedge J$ where $J$ is a current distribution and $\calF$
encodes the Maxwell 2--form $F$. For a test 1--form $\alpha$ 
\begin{align}
(\calF \wedge J)[\alpha] = - J[i_{\dual{\alpha}} F]
\label{DeRham_def_FJ}
\DEcomma
\end{align}
where $\dual{\alpha}$ is the metric dual of $\alpha$. Equation
(\ref{CDust_DivT_EM_Mat}) becomes
\begin{align}
D\tau = \calF \wedge J
\label{DustMult_Dtau_k}
\DEfullstop
\end{align}

In order to construct symmetric stress-energy tensors we introduce the symmetry operator, $\Sym$, where 
\begin{align}
\Sym(\alpha\otimes\beta)
=
\tfrac12 \alpha\otimes\beta + \tfrac12 \beta\otimes\alpha
\label{DeRham_def_sym}
\DEfullstop
\end{align}
\begin{lemma}
Let $\theta$ be a 1--form, then
\begin{align}
\big(\Sym\,D(\theta)\big) (V,U)
=
\big(\Bnabla_V \theta - \tfrac12 i_V d\theta\big) :U
\label{DeRham_sym_D}
\DEcomma
\end{align}
where $i_V$ is the internal contraction. 
\end{lemma}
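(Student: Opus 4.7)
The plan is to unpack the definitions and reduce everything to the standard identity expressing the exterior derivative of a 1--form via a torsion-free connection.

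First I would evaluate the left-hand side directly. Applying $\Sym$ to the $(0,2)$ tensor $D\theta$ at a pair $(V,U)$ gives
\begin{align*}
\big(\Sym D\theta\big)(V,U)
= \tfrac12 (D\theta)(V,U) + \tfrac12 (D\theta)(U,V)
= \tfrac12 (\Bnabla_U \theta):V + \tfrac12 (\Bnabla_V \theta):U,
\end{align*}
using \eqref{CoFree_SE_def_Dphi}. Thus the claim reduces to showing
\begin{align*}
\tfrac12 (\Bnabla_U \theta):V + \tfrac12 (\Bnabla_V \theta):U
= (\Bnabla_V \theta):U - \tfrac12 (i_V d\theta):U,
\end{align*}
which after a trivial rearrangement becomes the identity
\begin{align*}
d\theta(V,U) \;=\; (\Bnabla_V \theta):U - (\Bnabla_U \theta):V.
\end{align*}

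The second step is to verify this identity. I would start from the invariant formula $d\theta(V,U) = V(\theta:U) - U(\theta:V) - \theta:[V,U]$, expand $V(\theta:U) = (\Bnabla_V\theta):U + \theta:(\Bnabla_V U)$ and similarly for the $U$-term using the Leibniz rule for $\Bnabla$, and then invoke the torsion-free property $[V,U] = \Bnabla_V U - \Bnabla_U V$ of the Levi-Civita connection. The $\theta:(\Bnabla_V U)$ and $\theta:(\Bnabla_U V)$ contributions cancel against $\theta:[V,U]$, leaving exactly the desired equality.

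The only subtlety is making sure that the extension of $\Bnabla$ from vector fields to arbitrary tensor densities (as defined in the paper's Section~6 reference) still obeys the Leibniz rule against the duality pairing $\theta:U$ in the scalar case, and that the connection used is torsion-free; both properties hold for the Levi-Civita connection used throughout the paper. No step should require significant computation; the argument is a direct consequence of combining \eqref{CoFree_SE_def_Dphi}, \eqref{DeRham_def_sym}, and the standard Cartan-type formula for $d\theta$.
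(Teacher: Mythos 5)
Your proof is correct and takes essentially the same route as the paper's: both expand the symmetrization using the definition \eqref{CoFree_SE_def_Dphi} and reduce the claim to the torsion-free identity $d\theta(V,U)=(\Bnabla_V\theta):U-(\Bnabla_U\theta):V$. The only cosmetic difference is that you invoke the invariant formula for $d\theta$ directly, while the paper derives the same identity in-line via Lie derivatives and Cartan's formula.
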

\begin{proof}
\begin{align*}
2\,\Sym\,D(\theta) (U,V)
&=
D(\theta) (U,V)
+
D(\theta) (V,U)
=
\Bnabla_U \theta:V + \Bnabla_V \theta:U
\\&=
U({\theta : V}) - \theta:\Bnabla_U V + \Bnabla_V \theta:U
\\&=
L_U\theta : V - \theta:(\Bnabla_U V + [U,V] ) + \Bnabla_V \theta : U
\\&=
L_U\theta : V - \theta:\Bnabla_V U + \Bnabla_V \theta : U
=
i_V L_U\theta + V(\theta: U) + 2\Bnabla_V \theta : U
\\&=
i_V i_U d\theta + i_V d i_U \theta - V(\theta: U) + 2\Bnabla_V \theta : U
=
(- i_V d\theta + 2\Bnabla_V \theta) : U
\DEfullstop
\end{align*}
\end{proof}
We can write (\ref{DeRham_sym_D}) without the arbitrary vector $U$
using the slot notation as
\begin{align}
\big(\Sym\,D(\theta)\big) (V,-)
=
\Bnabla_V \theta - \tfrac12 i_V d\theta
\label{DeRham_sym_D_alt}
\DEfullstop
\end{align}

If $\kappa$ is a distribution which acts on test  tensors of type
(0,2)  then we define its symmetry as 
\begin{align}
\Sym (\kappa)[\phi] = \kappa[\Sym(\phi)] 
\label{DeRham_def_sym_dist}
\DEfullstop
\end{align}
Also if $\kappa$ is a distribution which acts on test 1--forms then we
define
\begin{align}
(\kappa\otimes V) [\alpha\otimes\beta] 
=
\kappa[(\beta:V)\alpha] 
\label{kj}
\DEfullstop
\end{align}

\def\VC{Y}

We can now define the stress-energy and current distribution for dust
in a coordinate free manner. Let $Y$ be a vector field such that the
multipole trajectory $C$ is an integral curve. 
Let $W_1,\ldots,W_k\in\Gamma TM$ be a set of vector fields such that
\begin{align}
[W_j,\VC]=0
\label{DeRham_Lie_Bracket}
\DEfullstop
\end{align}
Let the current for the dust multipole be given by
\begin{align}
J = q L_{W_1} \cdots L_{W_K}  C_\PF(1)  
\label{DustMult_J}
\end{align}
and the corresponding stress-energy multipole
\begin{align}
\tau =  \Sym \big(m L_{W_1} \cdots L_{W_k} C_\PF(1)\otimes \VC\big)
\label{DustMult_SE}
\DEfullstop
\end{align}
Here $C_\PF(1)$ is the de Rham push forward, given in \cite[Section 6]{Gratus:2020cok}.
By definition $\tau$ is symmetric. 
From linearity we can construct any
current and stress-energy tensor by adding together an arbitary
number of $J$ and $\tau$. 
To compare (\ref{DustMult_J}) and (\ref{DustMult_SE}) with
(\ref{CDust_Jdust_Multi}) and (\ref{CDust_Tmu}), we observe that we
set $\VC=\partial_0$ and the $W_j$ as the coordinate vectors
$\partial_{a_i}$. We then act on a test 2--form.
We show here it also satisfies
the divergence property (\ref{DustMult_Dtau_k}).

\begin{theorem}
The stress-energy multipole $\tau$ and current multipole $J$ 
given by \eqref{DustMult_SE} and \eqref{DustMult_J} satisfy the
divergence condition \eqref{DustMult_Dtau_k}.
\end{theorem}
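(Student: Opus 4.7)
The plan is to verify $D\tau = \calF\wedge J$ by a direct coordinate-free computation, using three ingredients: the symmetry of $\tau$, the formula \eqref{DeRham_sym_D_alt} for the symmetrised covariant derivative of a 1-form, and the Lorentz-force condition on the integral curves of $Y$.

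By the definition \eqref{CoFree_SE_def_Dtau} together with the symmetry of $\tau$ (so that $\tau[D\theta] = \tau[\Sym D\theta] = (\kappa\otimes Y)[\Sym D\theta]$, where $\kappa := m L_{W_1}\cdots L_{W_k} C_\PF(1)$), one obtains $D\tau[\theta] = -(\kappa\otimes Y)[\Sym D\theta]$. Since $\Sym D\theta$ is symmetric, $\Sym D\theta(V,Y) = \Sym D\theta(Y,V)$, and \eqref{DeRham_sym_D_alt} gives $\Sym D\theta(Y,-) = \Bnabla_Y\theta - \tfrac12 i_Y d\theta$. The pairing rule \eqref{kj} then yields
\begin{equation*}
D\tau[\theta] \;=\; -\kappa\!\left[\Bnabla_Y\theta - \tfrac12\, i_Y d\theta\right].
\end{equation*}

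The next step is the factorisation $\kappa = \hat\kappa\otimes Y$, where $\hat\kappa := m L_{W_1}\cdots L_{W_k}\delta_C$ and $\delta_C$ is the scalar distribution $f\mapsto\int f\circ C\,d\sigma$. This is valid because $[W_j,Y]=0$ implies each $L_{W_j}$ commutes through the vector factor $Y$ of $C_\PF(1)=\delta_C\otimes Y$. The pairing against a 1-form $\alpha$ becomes $\kappa[\alpha] = \hat\kappa[\alpha(Y)]$, which (i) annihilates the second summand since $(i_Y d\theta)(Y) = d\theta(Y,Y)=0$, and (ii) expands the first summand as $(\Bnabla_Y\theta)(Y) = Y(\theta(Y)) - \theta(\Bnabla_Y Y)$. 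The $Y(\theta(Y))$ piece also vanishes, because $\hat\kappa[Yf]=0$ for every test scalar $f$: this follows from $L_Y\hat\kappa = 0$, which in turn reduces to $L_Y\delta_C=0$ (i.e.\ $\int\tfrac{d}{d\sigma}(f\circ C)\,d\sigma = 0$ by compact support) combined with $[W_j,Y]=0$. The remaining term $-\hat\kappa[\theta(\Bnabla_Y Y)]$ is converted to the electromagnetic contribution via the Lorentz-force equation $\Bnabla_Y Y = (q/m)(i_Y F)^{\sharp}$, which gives $\theta(\Bnabla_Y Y) = (q/m)(i_{\dual{\theta}} F)(Y)$; inserting $J=(q/m)\kappa$ and the definition \eqref{DeRham_def_FJ} identifies this with $\calF\wedge J[\theta]$.

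The main obstacle I expect is the distributional justification of $L_Y\hat\kappa=0$ (and hence of $\hat\kappa[Yf]=0$ for arbitrary test scalars $f$), which needs care with density weights and with the duality between $L_Y$ acting on currents versus on test objects; similarly, the factorisation $C_\PF(1)=\delta_C\otimes Y$ must be interpreted carefully, since $C_\PF(1)$ is a vector-valued distribution rather than a vector field. A cleaner fallback, if the direct argument proves awkward, is to work in adapted coordinates $(\sigma,z^1,z^2,z^3)$ with $Y=\partial_0$ and $W_j=\partial_{a_j}$: in these coordinates $\tau$ and $J$ reduce to the expressions \eqref{CDust_Tmu} and \eqref{CDust_Jdust_Multi}, the identity $D\tau=\calF\wedge J$ is equivalent to the coordinate divergence \eqref{CDust_DivT}, and Theorem~\ref{lm_CDust_conserv} then applies directly.
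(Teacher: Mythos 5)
Your proposal is correct and follows essentially the same route as the paper's own proof: symmetrise $D\theta$ via \eqref{DeRham_sym_D_alt}, contract with $Y$ to kill the $i_Y d\theta$ term, split $i_Y\Bnabla_Y\theta$ into a total $\sigma$-derivative (which vanishes by compact support, your $L_Y\hat\kappa=0$ step being the paper's $\int dC^\star(\cdots)=0$) plus the $\theta(\Bnabla_Y Y)$ term, and convert the latter to $\calF\wedge J$ via the Lorentz force equation, commuting contractions past the $L_{W_j}$ using $[W_j,Y]=0$. The only difference is presentational — you factor through the scalar distribution $\delta_C$ where the paper manipulates $\int C^\star(\cdots)$ directly — and your adapted-coordinate fallback via Theorem~\ref{lm_CDust_conserv} is also a legitimate shortcut.
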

\begin{proof}
Let $\theta$ be a test 1--form
\begin{align*}
D\tau[\theta] 
&=
-\tau[D(\theta)]
=
-\Sym \big(m L_{W_1} \cdots L_{W_k} C_\PF(1)\otimes \VC\big)
[D(\theta)]
\\&
=
-m \big(L_{W_1} \cdots L_{W_k} C_\PF(1)\otimes \VC\big)
[\Sym D(\theta)]
\\&
=
-m L_{W_1} \cdots L_{W_k} C_\PF(1)
[\Sym D(\theta) (\VC,-)]
\\&
=
-m L_{W_1} \cdots L_{W_k} C_\PF(1)
[\nabla_\VC \theta - \tfrac12 i_\VC d\theta]
\\&
=
(-1)^{k+1}m C_\PF(1)
[L_{W_k} \cdots L_{W_1} (\nabla_\VC \theta - \tfrac12 i_\VC  d\theta)]
\\&
=
(-1)^{k+1}m 
\int C^\star \big(L_{W_k} \cdots L_{W_1} (\nabla_\VC \theta - \tfrac12 i_\VC  d\theta)\big)
\\&
=
(-1)^{k+1}m 
\int d\sigma\,
C^\star \big(i_\Cdot
L_{W_k} \cdots L_{W_1} (\nabla_\VC \theta - \tfrac12 i_\VC  d\theta)\big)
\\&
=
(-1)^{k+1}m 
\int d\sigma\,
C^\star \big(i_\VC
L_{W_k} \cdots L_{W_1} (\nabla_\VC \theta - \tfrac12 i_\VC  d\theta)\big)
\\&
=
(-1)^{k+1}m 
\int d\sigma\,
C^\star \big(
L_{W_k} \cdots L_{W_1} (i_\VC \nabla_\VC \theta - \tfrac12 i_\VC i_\VC  d\theta)\big)
\\&
=
(-1)^{k+1}m 
\int d\sigma\,
C^\star \big(L_{W_k} \cdots L_{W_1} (i_\VC \nabla_\VC \theta )\big)
\\&
=
(-1)^{k+1}m 
\int d\sigma\,
C^\star \big(L_{W_k} \cdots L_{W_1} (L_\VC (\theta:\VC) - \theta:\nabla_\VC \VC )\big)
\\&
=
(-1)^{k+1}m 
\int d\sigma\,
C^\star \big(L_\VC L_{W_k} \cdots L_{W_1} (\theta:\VC)\big)
\\&\qquad +
(-1)^{k}q
\int d\sigma\,
C^\star \big(L_{W_k} \cdots L_{W_1} (\theta:\dual{i_\VC F} )\big)
\\&
=
(-1)^{k+1}m 
\int 
d C^\star \big(L_{W_k} \cdots L_{W_1} (\theta:\VC)\big)
\\&\qquad + 
(-1)^{k+1}q
\int d\sigma\,
C^\star \big(L_{W_k} \cdots L_{W_1} (i_\VC i_{\dual\theta}F )\big)
\\&
=
(-1)^{k+1}q
\int d\sigma\,
C^\star \big(i_\VC L_{W_k} \cdots L_{W_1} (i_{\dual\theta}F )\big)
\\&
=
(-1)^{k+1}q
\int 
C^\star \big(L_{W_k} \cdots L_{W_1} (i_{\dual\theta}F )\big)
\\&
=
(-1)^{k+1}q
C_\PF(1) [L_{W_k} \cdots L_{W_1} (i_{\dual\theta}F )]
\\&
=
- q L_{W_1} \cdots L_{W_k} C_\PF(1) [i_{\dual\theta}F ]
=
- J [i_{\dual\theta}F ]
=
\big(\calF\wedge J\big)[ \theta]
\DEfullstop
\end{align*}
\end{proof}


\section{Conclusion and discussion}
\label{ch_Concl}

In this paper we consider the stress-energy multipole for both charged and uncharged dust. These are distributions which have support on a worldline. We demand that both are symmetric and that uncharged dust satisfies the divergence-free condition, whilst the divergence of charged dust is related to the current and the external electromagnetic field.

The required divergence of the charged multipole \eqref{CDust_DivT_EM} is subtle. Since the electromagnetic field of the generated by a multipole would diverge on the worldline (and this divergence is very fast), we cannot simply equate the divergence of the multipole stress-energy tensor with the divergence of the electromagnetic stress-energy tensor. Instead, inspired by the divergence of the electromagnetic stress-energy tensor, we posit the required equation \eqref{Dust_Tmunu_dust}.

We formulate both the charged and uncharged stress-energy multipoles to arbitrary order. We show how they satisfy the required conditions and also how they arise naturally in the limit as one squeezes regular dust onto the worldline. These are particularly simple in the Ellis representation of multipoles, with coordinates adapted to a flow of geodesics or the Lorentz force equation. In this case the components are constants. 

Although the multipoles are simple in the adapted coordinate system, and therefore their properties hold in all coordinate systems, the formula for transformation between coordinate systems is complicated. They are not tensorial as they involve both higher derivatives and integration~\cite{Gratus:2020cok,Gratus:2018kyo}. For this reason, in section~\ref{ch_DeRham}, we also show how the general results can be demonstrated in a coordinate free language.

We consider arbitrary uncharged and charged multipoles up to quadrupole order and derive the equations for the components. Then as a sanity check, we confirm that the components of the charged and uncharged dust multipoles, when truncated to quadrupoles, do indeed satisfy these equations.

In~\cite{Gratus:2020cok}, we observed that, at the quadrupole order, the divergence equations, are not sufficient to determine the dynamics of the components. For both cases, there are 40 equations for 60 variables, assuming the worldline is prescribed. Thus there is a need for constitutive relations to fully describe the dynamics. It is hoped that by deriving the equations of motion for well known matter, one could identify the constitutive relations. Unfortunately, although we have the dynamical equations for the components, and we demonstrate that they satisfy the required ODEs, it is not obvious how to identify particular equations as constitutive relations.

As stated in the introduction, this work can be applied to various branches of physics. The uncharged dust is a good model nebula, or even galactic systems. The dynamics of the quadrupole components are directly related to gravitational waves~\cite{Gratus:2023zcm}. Thus, one can compare the detected gravitational waves with those by the dust model. For example, when we are able to detect primordial gravitational waves we could ask if these are consistent with dust quadrupoles. One mathematical calculation which would need to be performed is to express the components in a coordinate system adapted to us as observers, instead of the geodesic flow of the source. In this context the coordinate free language will be invaluable. 

There are many other stress-energy multipoles one could consider. Examples include kinetic, pressure and spin.

In a kinetic model, there is a range of velocities at each event in spacetime, and one must work in 7-dimensional time-phase space. This range of velocities is incorporated into the kinetic ``distribution''\footnote{A different use of the word {\em distribution}.} scalar field on 7--dimensional phase-space-time space. 
Collisionless charged particles obey  
the Vlasov equation~\cite{vlasov}, which describes the time evolution of the kinetic distribution function of plasma, consisting of charged particles (electrons and ions) with long-range interaction. 
In~\cite{Dymnikov:1978,Channell:1983,Warwick:2023clf}, the dynamics of the components of a Vlasov multipole on phase-space-time space are given. 
Using the Ellis
representation of the de Rham current representation of
the moments, coordinate transformations were derived \cite{Warwick:2023clf}
between frames that mix the space and time coordinates. The results were
confirmed numerically for the case of particles orbiting a black
hole. The current and stress-energy distributions, corresponding to the Vlasov distribution, can be derived by projecting the distribution onto spacetime using the de Rham push forward.

Another possibility is to consider a fluid with a pressure. This model should arise in the limit as one squeezes a fluid with pressure onto the worldline. However, this cannot be done naively as, unlike dust, the pressure, directly opposes such squeezing. We conjecture that it would be possible if in~\eqref{Dust_T_expan}, the pressure acts at order $\epsilon^2$ and higher.

As noted in the introduction, cosmic dust does not possess any total spin. In order to introduce spin, one could look into the Weyssenhoff dust model~\cite{Weyssenhoff:1947,Obukhov:1987yu}, which includes a factor of spacetime torsion to model spin. Employing the adapted coordinate system, one may be able to compute the dynamics of the moments of
the Weyssenhoff dust multipole. 

\subsection*{Funding statement}
JG is grateful for
the support provided by STFC (the Cockcroft Institute ST/P002056/1 and
ST/V001612/1). 
ST is grateful for the support of Lancaster University's Faculty of Science and Technology. WS is also grateful for the support of Lancaster University's Faculty of Science and Technology.

\subsection*{Author Contributions}
All authors did the research regarding the uncharged dust and the coordinate free calculations.
JG and ST did the research regarding charged dust.
WS reviewed the Weyssenhoff dust model.
JG and ST wrote the manuscript.

\subsection*{Conflict of interest statement}
There are no conflicts of interest.

\subsection*{Ethics statement}
This work did not involve human participants, animal subjects, or sensitive data, and therefore did not require ethical approval. No ethical issues were identified in the course of this research.

\subsection*{Data Access Statement}
No additional data was created for this article.


\end{document}